\documentclass[11pt,letterpaper]{article}

\usepackage[margin=1in]{geometry}

\usepackage{amsfonts, amsmath, amssymb, amsthm}
\usepackage{mathrsfs} % \mathscr{P} or \mathscr{B}
\usepackage{setspace}

\usepackage{hyperref}
\hypersetup{
    colorlinks=true,
    linkcolor=teal,
    citecolor=teal,
    urlcolor=purple,
}
\usepackage{graphicx}           
\usepackage{xcolor}
\usepackage{algorithm}
\usepackage{algorithmic}
\usepackage{subfig}

\newtheorem{theorem}{Theorem} % \newtheorem{theorem}{Theorem}[section] for numbering by sections		
\newtheorem{lemma}{Lemma} % \newtheorem{lemma}[theorem]{Lemma} to follow Theorem numbering

\theoremstyle{definition}
\newtheorem{definition}{Definition}	
\newtheorem{remark}{Remark}

\newcommand{\cA}{\mathcal{A}}

\newcommand{\cC}{\mathcal{C}}
\newcommand{\cD}{\mathcal{D}}

\newcommand{\cI}{\mathcal{I}}

\newcommand{\cS}{\mathcal{S}}

\newcommand{\cX}{\mathcal{X}}
\newcommand{\cY}{\mathcal{Y}}
\newcommand{\cZ}{\mathcal{Z}}

\newcommand{\R}{\mathbb{R}}

\newcommand{\E}{\mathbb{E}} % no spacing
\renewcommand{\P}{\mathbb{P}}

\DeclareMathOperator*{\argmin}{arg\,min} % \, for spacing
\DeclareMathOperator*{\argmax}{arg\,max} % \, for spacing

\renewcommand{\epsilon}{\varepsilon}

\title{Weighted Conformal Clustering}
\author{Anirban Nath \\
    Department of Statistics, Columbia University\\
    and \\
    YoonHaeng Hur \\
    Department of Statistics, Columbia University\\
    and \\
    Genevera I. Allen \\
    Department of Statistics, Columbia University}

\begin{document}

\maketitle

\begin{abstract}
    Clustering is a central tool for discovering latent structure in unlabeled data; yet modern clustering pipelines often end with a hard assignment of each observation to a cluster without rigorous measures of assignment uncertainty. We propose a novel weighted conformal approach for constructing valid confidence sets for cluster labels. The key difficulty is that the labels available for calibration are not observed ground-truth labels, but synthetic labels produced by a data-dependent clustering algorithm. Our method develops a conformal inference algorithm that corrects the resulting mismatch with the latent target labels through weights by formulating conformal clustering as a conditional label-distribution shift problem. We first derive an oracle procedure that attains finite-sample marginal coverage and then develop a computationally tractable and implementable version using estimated conditional label probabilities and novel augmented calibration. We show that the coverage of the estimated-weight procedure depends on the estimator, giving an explicit bound on the loss relative to the nominal level. Empirical studies demonstrate that the proposed weighted approach offers improvements over the recently proposed split conformal clustering procedure in terms of informative confidence set size, especially in nonlinear and high-dimensional clustering applications.
\end{abstract}

\section{Introduction}
\label{sec:intro}
Clustering is a central tool deployed to uncover latent group structures, which has been integrated as a fundamental component of modern data analysis pipelines across many scientific and industrial domains \cite{kiselev2019challenges, fortunato2010community}. Despite the advances in methodology and software, clustering often culminates in a hard assignment of each observation to a cluster without rigorous measures of assignment uncertainty, which can be problematic when the resulting cluster labels are used for subsequent scientific interpretation or operational decisions. Accordingly, it is instrumental to develop rigorous methods for uncertainty quantification for clustering, which has been approached from macroscopic or structural perspectives in the literature. Prominent methodologies include assessing centroid variability via bootstrapping \cite{kerr2001bootcluster, hofmans2015bootstrapkmeans, liu2018ukmeans}, testing the global existence and separation of clusters \cite{liu2008statistical, huang2015statistical, kimes2017statistical, shen2024statistical, yun2023selective, chen2023selective, gao2024selective}, and generating Bayesian credible sets over entire partitions \cite{wade2018bayesian, dahl2022search}. However, these global structural measures largely overlook a crucial, more granular target: the uncertainty inherent in the specific cluster assignment of an individual observation. 

Quantifying cluster assignment uncertainty is uniquely challenging because cluster labels are categorical, invariant to permutations, and typically generated by deterministic procedures deeply dependent on the observed dataset. Hence, these assignments lack the properties of typical continuous parameters, making the standard error-based analysis inapplicable. A further difficulty is that the partition returned on the observed data does not, by itself, define a calibrated rule for labeling new points or for expressing uncertainty about those labels. To address this, we take a viewpoint that frames label uncertainty as the reliability of a predictive labeling function and introduce a versatile methodology that transforms any base clustering algorithm into a system capable of generating statistically valid confidence sets for cluster membership anywhere in the feature space. These sets offer highly interpretable uncertainty quantification: singletons indicate clear, unambiguous assignments, while multi-label sets highlight ambiguous regions where several cluster identities are possible.

Very recently, \cite{hur2026inference} proposed a conformal-inference approach to quantify the uncertainty of cluster labels. Conformal inference is a general framework for constructing prediction sets with finite-sample, distribution-free coverage guarantees under exchangeability, without requiring a correctly specified model for the data distribution \cite{vovk2005algorithmic}. This framework has been especially influential in supervised learning, where it has been used to quantify predictive uncertainty in regression \cite{lei2018distribution,romano2019conformalized}, classification \cite{lei2014classification,sadinle2019least,romano2020classification}, and other applications. Its appeal lies in the fact that conformal methods can be applied on top of many predictive algorithms, converting fitted scores or predictions into set-valued outputs with rigorous coverage guarantees using a suitable calibration sample. \cite{hur2026inference} demonstrated that conformal inference can be adapted to unsupervised clustering in a principled way by framing label assignment as a predictive problem.\footnote{As noted in \cite{hur2026inference}, their framework is different from related approaches under the name of ``conformal clustering'' \cite{cherubin2015conformal,nouretdinov2020multi,kiani2020conformalized}, which use conformal p-values \cite{bates2023testing,liang2024integrative,lee2025full} to characterize and cluster outliers, rather than quantifying uncertainty.} At a high level, their approach adapts split conformal classification to clustering by using stochastic cluster labels to capture uncertainty, fitting a predictive rule for these labels, and calibrating conformity scores to return confidence sets for cluster membership at new points. However, \cite{hur2026inference} specifically states that since the calibration labels are synthetically generated by the clustering algorithm rather than sampled from a true ground-truth distribution, this procedure fundamentally violates the exchangeability assumption required for standard conformal validity. The paper then quantifies the resulting undercoverage in terms of the consistency and replace-one stability of the clustering algorithm, with a particular focus on smooth parametric mixture models. While this framework provides a valuable first step towards conformal clustering, important theoretical and methodological questions remain, with the need for investigation beyond the mixture-model setting and for more general clustering algorithms. 

In this paper, we develop a novel conformal inference framework for uncertainty quantification of cluster labels by utilizing weighted conformal prediction to address the distributional mismatch. Introduced by \cite{tibshirani2019conformal} in the context of covariate shift, weighted conformal prediction extends conformal inference beyond iid settings by reweighting the calibration distribution with likelihood-ratio weights, and has since motivated a broader line of work on conformal prediction under distribution shift. This perspective is well-suited to clustering because the labels used for calibration are generated by a clustering algorithm, whereas the desired coverage is with respect to the unobserved latent labels. We address this first by theoretically capturing the oracle conformal weights, which guarantee exact finite-sample coverage when the true conditional label distribution is known. Then, we develop a computationally tractable and fully data-driven algorithm that replaces the oracle quantities with estimated conditional probabilities and show that the coverage loss due to estimation is explicitly bounded by the estimation error of the chosen conditional label estimator. Finally, we empirically demonstrate the advantages of this method over \cite{hur2026inference} in different settings.

\paragraph{Our Contribution} We make three main contributions. First, we introduce a novel framework for uncertainty quantification of cluster labels that formulates this as a conditional label-distribution shift problem. We use a weighted conformal framework beyond exchangeable settings to quantify the validity of this procedure. Secondly, we propose a new augmented calibration mechanism that enjoys the benefits of both full-conformal and split-conformal approaches. This is an innovative idea with potential applications in conformal prediction for unsupervised learning beyond clustering, which broadens the scope of weighted conformal approaches in general. Finally, we show that this new framework achieves valid and more informative confidence sets in high-dimensional and nonlinear clustering problems, including spectral clustering applications, thereby extending the conformal clustering framework of \cite{hur2026inference} beyond its original setting.

\section{Weighted Conformal Clustering: Coverage Guarantees}
\subsection{Problem Formulation}
\label{sec:problem_formulation}
Adopting the inference framework for cluster labels proposed by \cite{hur2026inference}, we formally state the problem as follows. Given covariates \(X_1, \ldots, X_n \in \R^p\), clustering assigns labels in \([K] := \{1, \ldots, K\}\), where \(K\) is fixed and known. We assume a latent-label model in which \((X_i, Y_i^\ast)_{i = 1}^n\) are i.i.d. from an unknown distribution $P^\ast$ on $\R^p\times[K]$, but only the covariates are observed. For a new test point $(X_{n+1},Y_{n+1}^\ast)\sim P^\ast$, independent of the observed data, our goal is to construct a set-valued map $\hat{\cC} \colon \R^p \to 2^{[K]}$ that contains the true cluster label with high probability. Because cluster labels are identifiable only up to permutation, coverage must be stated after aligning the output labels with the latent ground-truth labels. To this end, the notion of oracle alignment was introduced in \cite{hur2026inference} as follows: $\hat{\sigma}_o^\ast \in \argmax_{\sigma\in\cS_K}
\P(\sigma(Y^\ast)\in \hat{\cC}(X)\mid \hat{\cC})$, where $\cS_K$ is the set of all permutations of $[K]$ and $(X,Y^\ast)\sim P^\ast$ is independent of the randomness used to construct $\hat{\cC}$. We seek a confidence set satisfying, for a prescribed level $\alpha\in(0,1)$,
\begin{equation}
    \label{eq:target_coverage_bound}
    \P\left(\hat{\sigma}_o^\ast(Y_{n+1}^\ast)\in \hat{\cC}(X_{n+1})\right)\ge 1-\alpha.
\end{equation}
Thus, $\hat{\cC}(x)$ quantifies pointwise cluster-label uncertainty: a singleton set indicates a relatively certain assignment, while a larger set indicates that multiple cluster labels remain plausible.

Since we do not know the true labels, we have to be reliant on the labels generated by the clustering algorithm. For that, we define the following notation. 

\begin{definition}[Stochastic clustering algorithm]
    \label{def:stochastic_clustering}
    Let $\mathcal{X}$ denote the domain on which cluster assignments are defined, and let $\Delta_K \subset \R^K$ be the probability simplex. A clustering algorithm $\mathcal{A}$ is called a \emph{stochastic clustering algorithm} if, for any input dataset $\cD_n = \{x_1,\dots,x_n\} \subseteq \R^p$, it outputs a mapping $\mathcal{A}_{\cD_n} : \mathcal{X} \to \Delta_K$. For each $x \in \mathcal{X}$, the vector $\mathcal{A}_{\cD_n}(x) = \bigl(\mathcal{A}_{\cD_n}(x)_1,\dots,\mathcal{A}_{\cD_n}(x)_K\bigr)$ represents the soft cluster assignment probabilities for $x$, where $[\mathcal{A}_{\cD_n}(x)]_k$ is interpreted as the probability that $x$ belongs to cluster $k$. A realized cluster label for $x$ is then generated according to $Y \mid x,\cD_n \sim \mathrm{Cat}\bigl(\mathcal{A}_{\cD_n}(x)\bigr)$ where, for any $w \in \Delta_K$, Cat($w$) denotes the categorical distribution on $[K]$ that draws $k \in [K]$ with probability $w_k$.
\end{definition}

\begin{remark}
    For this paper, we consider the map $\cD \mapsto \mathcal A_{\cD}$ that is permutation-invariant in its argument, in the sense that it depends only on the unordered set $\cD$.
\end{remark}

The framework in Definition~\ref{def:stochastic_clustering} is broad enough to encompass all clustering procedures essentially. For non-generalizable methods such as k-means or hierarchical clustering, assignments are produced only for the observed data points, so $\mathcal X=\cD_n$. In contrast, model-based methods such as Gaussian mixture models define assignment probabilities for any $x\in\R^p$, in which case $\mathcal X=\R^p$, and we call the procedure a \emph{generalizable stochastic clustering algorithm}. Moreover, deterministic clustering algorithms can also be explained as a special case: if an algorithm assigns hard labels $y_1,\ldots,y_n\in[K]$, then we may set $\left[\mathcal A_{\cD_n}(x_i)\right]_k= 1\{k=y_i\}$, so that $Y_i\sim\mathrm{Cat}(\mathcal A_{\cD_n}(x_i))$ is degenerate at the assigned label. Thus, hard clustering corresponds to point-mass assignment probabilities, while soft clustering allows nondegenerate probabilities over multiple clusters.

\subsection{Prelude: Oracle-Weighted Split Conformal Clustering and Its Limitations}
As discussed in Section~\ref{sec:intro}, the proposed uncertainty quantification framework of \cite{hur2026inference} through conformal prediction generates labels by a fitted clustering rule. Since the synthetic cluster labels are produced in a data-dependent manner, the cluster-labeled calibration sample is not exchangeable with the pair of test point and its true label, so the classical conformal validity argument does not apply directly. To make this precise, note that for any unordered covariate set \(\mathcal D\), the stochastic clustering algorithm \(\mathcal A_{\mathcal D}\) induces a conditional label distribution on \([K]\) via $Y \mid X=x,\mathcal D \sim \mathrm{Cat}(\mathcal A_{\mathcal D}(x))$, which along with the true covariate marginal \(P_X^\ast\), yields a synthetic joint distribution, say $\tilde P$ on \(\R^p \times [K]\), whereas the target distribution governing the true latent labels is $P^\ast$ defined below: 
\begin{equation*}
    \tilde P(dx,dy) := P_X^\ast(dx)\sum_{k=1}^K [\mathcal A_{\mathcal D}(x)]_k\,\delta_k(dy), \quad P^\ast(dx,dy) = P_X^\ast(dx)\sum_{k=1}^K p_{Y\mid X}^\ast(k\mid x)\,\delta_k(dy),
\end{equation*}
where $p_{Y\mid X}^\ast (\cdot\mid\cdot)$ is the conditional density corresponding to the oracle conditional distribution of the labels, namely $P_{Y\mid X}^\ast$. Thus, \(\tilde P\) and \(P^\ast\) have different joint laws although they share the same marginals. The clustering procedure supplies calibration labels from \(\tilde P\), whereas the desired guarantee in \eqref{eq:target_coverage_bound} concerns labels from \(P^\ast\). This perspective frames the problem as a form of conditional label-distribution shift, or \textit{concept shift}, where validity requires correcting the discrepancy between the synthetic label mechanism induced by \(\mathcal A_{\mathcal D}\) and the true latent-label mechanism.

Distribution shift has been widely studied in the conformal inference literature, with much of the focus on covariate shift \cite{tibshirani2019conformal} and label shift \cite{podkopaev2021distribution,si2023pac}, which can be extended to more general forms of shift. The key idea behind this is to reweight the calibration scores when forming the empirical score distribution. With correctly specified weights, the resulting weighted distribution matches the conditional empirical distribution of the test-point score, and its weighted quantile can be used to determine whether the test point is included in the prediction set. The standard conformal procedure is a special case of this, where we choose uniform weights for all points. However, many existing methods rely on the exchangeability of the augmented calibration-test sample, which is absent in our setting. The problem considered here is thus more delicate because we not only encounter a distribution shift problem but also a non-exchangeable data-generating procedure. Therefore, we must view the problem through a more general lens of conformal prediction under arbitrary joint distributions, where validity does not require exact exchangeability of the augmented sample. We briefly review this framework in Appendix~\ref{appendix:wcp_theory}.

In \cite{hur2026inference}, the failure of exchangeability is solely treated as a source of error whose effect can be quantified, leading to undercoverage bounds and asymptotic validity under additional stability and model assumptions. In contrast, we want to address this exchangeability issue more directly by seeking a conformal construction that accounts for the distributional mismatch induced by synthetic cluster labels. To this end, we attempt to derive the oracle-weighted version of the split conformal clustering \cite{hur2026inference}. Instead of assigning uniform weights to the conformity scores, we calculate the weights that correct the discrepancy discussed above, leading to a weighted empirical score distribution and a corresponding weighted prediction rule.

Based on the general theory of conformal prediction adapted to our setting, we can choose the desired weights as follows. 
Let \(\cD_{ca}=\{X_i\}_{i \in \cI_{ca}}\) denote the calibration covariates with $\cI_{ca}$ denoting their indices. Let \((X_i,Y_i)_{i \in \cI_{ca}}\) be the corresponding calibration sample, where the labels are generated by the stochastic clustering algorithm $\cA_{\cD_{ca}}$ as defined in Definition \ref{def:stochastic_clustering}. Let $x$ be the test covariate of interest. For $i \in \cI_{ca}$, let $\cD_{ca, \{x\}}^{-i} = \cD_{ca} \cup \{x\} \setminus \{X_i\}$ and define the unnormalized weights as
\begin{equation}
    \label{eq:weight_calibration} 
    \tilde w_i(y)
    \propto
    p_{Y\mid X}^\ast\left(Y_i \mid X_i\right)
    \left(
    \prod_{j \in \cI_{ca} \setminus \{i\}}
    \left[\mathcal A_{\cD_{ca, \{x\}}^{-i}}(X_j)\right]_{Y_j}
    \right)
    \left[\mathcal A_{\cD_{ca, \{x\}}^{-i}}(x)\right]_{y},
\end{equation}
while the test-point weight is
\begin{equation}
    \label{eq:weight_test}
    \tilde w(y)
    \propto
    p_{Y\mid X}^\ast(y\mid x)
    \prod_{j \in \cI_{ca}}
    \left[\mathcal A_{\cD_{ca}}(X_j)\right]_{Y_j},
\end{equation}
which are normalized to \(\{w_i(y)\}_{i \in \cI_{ca}} \cup \{w(y)\}\) that sums to one. Using these weights, we derive the oracle-weighted version of split conformal clustering proposed in \cite{hur2026inference}; see Algorithm~\ref{alg:oracle_weighted_conformal_clustering}.

\renewcommand{\thealgorithm}{0} % Restore normal numbering format
\begin{algorithm}[!htbp]
    \caption{Weighted Split Conformal Clustering (Oracle)}
    \label{alg:oracle_weighted_conformal_clustering}
    \begin{algorithmic}[1]
        \REQUIRE Covariates \(X_1, \ldots, X_n \in \R^p\), user-specified level \(\alpha \in (0, 1)\), test point $X_{n+1} = x$.
        \STATE Split the covariates into a training set \(\{X_i\}_{i \in \cI_{tr}}\) and a calibration set \(\{X_i\}_{i \in \cI_{ca}}\).
        \STATE Apply clustering to the training data \(\{X_i\}_{i \in \cI_{tr}}\) and obtain corresponding labels \(\{Y_i\}_{i \in \cI_{tr}}\).
        \STATE Fit a soft classifier \(\hat{\pi} \colon \R^p \to \Delta_K\) on the cluster-labeled training data \(\{(X_i, Y_i)\}_{i \in \cI_{tr}}\).
        \STATE Apply clustering to the calibration data \(\cD_{ca}\) and obtain corresponding labels $\{Y_i\}_{i \in \cI_{ca}}$. 
        \STATE For each candidate label \(y \in [K]\), calculate the calibration weights $w_i(y)$ for $i \in \cI_{ca}$ and the test weight $w(y)$ obtained by normalizing the weights in \eqref{eq:weight_calibration}-\eqref{eq:weight_test}.
        \STATE Define suitable conformity scores $S_i = s((X_i, Y_i); \hat{\pi})$ for $i \in \cI_{ca}$ and construct
        \begin{equation*}
            \hat{\cC}(x) = \left\{y \in [K] : s((x, y); \hat{\pi}) \le \mathsf{Q}_{1 - \alpha}\left({\textstyle\sum_{i \in \cI_{ca}}} w_i(y) \, \delta_{S_i} + w(y) \delta_{\infty}\right)\right\}.
        \end{equation*}
        \vspace*{-10pt}
        \RETURN \(\hat{\cC}(x)\).
    \end{algorithmic}
\end{algorithm}

Algorithm~\ref{alg:oracle_weighted_conformal_clustering} modifies the split conformal clustering procedure (Algorithm 1 of \cite{hur2026inference}) by introducing the weights in Step 5 and consequently constructs $\hat{\cC}$ by calculating the weighted quantile $\mathsf{Q}_{1 - \alpha}$ of the conformity scores; see Definition~\ref{def:quantile}. Here, $s$ in Step 6 is any suitable conformity score function used in conformal classification. In Algorithm 1 of \cite{hur2026inference}, Step 5 was the alignment between the calibration labels and the training classifier, which is skipped in Algorithm~\ref{alg:oracle_weighted_conformal_clustering} for the reason we explain later; see also Remark~\ref{rmk:why_no_alignment}. Now, the following theorem formalizes the marginal coverage guarantee of Algorithm~\ref{alg:oracle_weighted_conformal_clustering}.

\begin{theorem}[Finite-sample coverage of oracle-weighted algorithm]
    \label{thm:exact_weights}
    Suppose $X_1,\ldots,X_n \in \R^p$ are i.i.d. from $P^\ast_X$. Let \(\hat{\cC}\) be the output of Algorithm~\ref{alg:oracle_weighted_conformal_clustering}. Then, for $(X_{n+1},Y_{n+1}^\ast)\sim P^\ast$ independent of $\{X_i\}_{i=1}^n$, we have
    \(
    \P(\hat{\sigma}_o^\ast(Y_{n+1}^\ast)\in \hat{\cC}(X_{n+1}))\ge 1-\alpha.
    \)
\end{theorem}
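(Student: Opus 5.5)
We will prove the theorem by reducing it to the general weighted-conformal result for arbitrary joint distributions (the framework reviewed in Appendix~\ref{appendix:wcp_theory}). Throughout we condition on the training split, so $\hat\pi$ and the score function $s(\cdot;\hat\pi)$ are fixed. The remaining randomness is in the calibration covariates, their synthetic labels, and the test pair.

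\textbf{Step 1: the joint law.} Write $m=|\cI_{ca}|$ and $Z_i=(X_i,Y_i)$ for $i\in\cI_{ca}$. We define the joint law $Q$ of the augmented vector $(Z_{i})_{i\in\cI_{ca}}, Z_{m+1}$, with $Z_{m+1}=(X_{n+1},y)$, to have density
\[
\prod_{j\in\cI_{ca}\cup\{n+1\}} p^\ast_X(x_j)\;\prod_{j\in\cI_{ca}}[\cA_{\cD_{ca}}(x_j)]_{y_j}\;p^\ast_{Y\mid X}(y_{m+1}\mid x_{m+1}).
\]
This is exactly the law of the calibration sample, whose labels come from clustering $\cD_{ca}$, together with an independent test pair drawn from $P^\ast$. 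The general theorem states the following. Let $\sigma_i$ swap coordinate $i$ with the test coordinate, and define weights $w_i\propto q(z^{\sigma_i})$ and $w\propto q(z)$, normalized over the $m+1$ swaps. Then
\[
\P\bigl(s(Z_{m+1})\le \mathsf Q_{1-\alpha}\bigl(\textstyle\sum_i w_i\delta_{S_i}+w\,\delta_\infty\bigr)\bigr)\ge 1-\alpha .
\]
The point is that, conditional on the unordered multiset of the $m+1$ points, the test coordinate occupies position $j$ with probability proportional to $q$ evaluated on the correspondingly swapped configuration.

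\textbf{Step 2: matching the weights.} In the swap $\sigma_i$, the point $(x,y)$ moves into calibration slot $i$ and $(X_i,Y_i)$ moves into the test slot.
\begin{itemize}
\item The covariate-marginal factors $\prod p^\ast_X$ are invariant under the swap and cancel in normalization.
\item The clustering is applied to the calibration covariate set, which becomes $\cD^{-i}_{ca,\{x\}}$.
\item The label factor for slot $i$ becomes $[\cA_{\cD^{-i}_{ca,\{x\}}}(x)]_y$.
\item The test factor becomes $p^\ast_{Y\mid X}(Y_i\mid X_i)$.
\end{itemize}
Together these give exactly \eqref{eq:weight_calibration}. The identity swap gives \eqref{eq:weight_test}. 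Here the permutation-invariance Remark is used: $\cA_{\cD}$ depends only on the unordered set, so the swapped dataset is well defined irrespective of ordering.

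\textbf{Step 3: coverage and alignment.} Since the weights depend on the candidate $y$ only through the test slot, plugging in the true $Y^\ast_{n+1}$ gives
\[
\P\bigl(Y^\ast_{n+1}\in\hat\cC(X_{n+1})\bigr)\ge 1-\alpha
\]
with the identity alignment. The oracle alignment $\hat\sigma^\ast_o$ maximizes the conditional coverage over permutations given $\hat\cC$. Its conditional coverage therefore dominates that of the identity pointwise, and taking expectations gives the claim. This is also why no alignment step is needed in the algorithm.

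\textbf{Main obstacle.} The hardest step is verifying the conditioning argument of Step 1 in this discrete–continuous setting. Swapping changes the dataset fed into $\cA$, so $q$ is not a product of identical marginals, and we must check that the weighted empirical distribution of scores indeed equals the conditional law of the test score given the multiset $\{Z_j\}$. We would handle ties and atoms in the scores by working with the multiset of $Z$'s rather than of scores, which avoids ties issues. We would then apply the quantile lemma (Definition~\ref{def:quantile}), noting that the point mass at $\infty$ only makes the set larger.
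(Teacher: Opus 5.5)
Your proposal is correct and follows the paper's proof essentially step for step: reduce to coverage of $Y_{n+1}^\ast$ itself via the optimality of $\hat{\sigma}_o^\ast$, condition on the training split, write the joint density with calibration labels drawn from $\cA_{\cD_{ca}}$ and the test label drawn from $p^\ast_{Y\mid X}$, read off \eqref{eq:weight_calibration}--\eqref{eq:weight_test} by moving each calibration point into the test slot, and invoke Theorem~\ref{thm:weighted_conformal_permutation}. Using single transpositions instead of the paper's sum over the $n!$ permutations with $\sigma(n+1)=i$ is equivalent because the density is symmetric in the calibration coordinates (the permutation invariance of $\cA$), and your treatment of the $\delta_{\infty}$ atom is exactly what Lemma~\ref{lem:weighted_quantile_rank} provides.
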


The proof of Theorem~\ref{thm:exact_weights} is deferred to Appendix \ref{sec:proofs}. The idea is to utilize the generalized weighted conformal inference theory based on permutations, which we review in Appendix \ref{appendix:wcp_theory}. The crux of this theory is to derive permutation-dependent weights from the joint distribution. We show that these weights can be reduced to \eqref{eq:weight_calibration} and \eqref{eq:weight_test} in our clustering setting. Then, from the general theory, we can show the desired coverage. Lastly, we remark that while the aforementioned alignment step in Algorithm 1 of \cite{hur2026inference} is natural in practice, we have excluded this step in Algorithm \ref{alg:oracle_weighted_conformal_clustering} to keep the theoretical derivation straightforward and streamlined.

While Theorem~\ref{thm:exact_weights} provides the desired coverage guarantee, the exact oracle weights reveal two obstacles to implementation. First, they involve the unknown conditional density \(p_{Y\mid X}^\ast\), which is an oracle quantity and must be estimated in practice. Second, their sample-dependent terms require evaluating the clustering rule on the leave-one-out augmented covariate set \(\cD_{ca, \{x\}}^{-i}\) for \(i \in \cI_{ca}\). Hence, for each test covariate \(x\), the exact construction requires one clustering fit on \(\cD_{ca}\) and \(|\cI_{ca}|\) additional leave-one-out fits, which is computationally intractable. Also, this inevitably introduces label misalignment from multiple clustering fits. Therefore, beyond the statistical challenge of estimating the oracle label law, the exact oracle-weighted construction is computationally expensive even at the level of the sample-dependent terms, motivating a more tractable approximation.

\subsection{A Tractable Estimated-Weight Construction}
As the preceding discussion makes clear, a practicable weighted-conformal clustering procedure must not only overcome the computational burden of repeated clustering refits but also come with a tractable estimation of the latent conditional law $P_{Y\mid X}^\ast$. To alleviate the computational difficulty, we introduce a novel augmented calibration mechanism that eliminates the leave-one-out refitting burden in the oracle weights. Our key idea is to remove this dependence by modifying the synthetic-label generation mechanism itself. Instead of generating calibration labels from a clustering rule fitted on \(\cD_{ca}\), we fit the clustering algorithm once on the augmented covariate set for $X_{n + 1} = x$, namely,
\[
    \cD_{aug}:=\cD_{ca}\cup\{x\}.
\]
This augmentation is a simple yet novel idea that bridges the full-conformal and split-conformal prediction, utilizing the advantages of both mechanisms. Consequently, the joint synthetic likelihood factors that previously necessitated repeated refits become common multiplicative terms and cancel after normalization. The corresponding oracle weights reduce to the pointwise likelihood-ratio form as follows
\begin{equation}
    \label{eq:augmented_oracle_weights}
    \tilde w_i(y)
    \propto
    \frac{p_{Y\mid X}^\ast\left(Y_i\mid X_i\right)}
    {\left[\mathcal A_{\cD_{aug}}(X_i)\right]_{Y_i}}
    \quad \text{for} ~ i \in \cI_{ca}, 
    \qquad
    \tilde w(y)
    \propto
    \frac{p_{Y\mid X}^\ast(y\mid x)}
    {\left[\mathcal A_{\cD_{aug}}(x)\right]_y}.
\end{equation}
This transformation is both statistically and computationally consequential. Statistically, the weights retain the familiar distribution-shift interpretation, with each ratio comparing the target latent-label probability in the numerator to the synthetic label probability induced by the augmented clustering fit in the denominator. Computationally, the augmented construction replaces the \(|\cI_{ca}|+1\) clustering fits required by the exact oracle weights in \eqref{eq:weight_calibration}--\eqref{eq:weight_test} with a single fit on \(\cD_{aug}\). Thus, the augmented calibration step is the pivotal mechanism that turns the weighted conformal correction into a tractable framework for cluster-label uncertainty quantification.
 
The augmented construction removes the main computational obstacle, but the weights in \eqref{eq:augmented_oracle_weights} still depend on the unknown conditional density \(p_{Y\mid X}^\ast\). To obtain an implementable method, let \(\hat p_{Y\mid X}\) be a suitable estimator of the true conditional label density $p_{Y \mid X}^\ast$, which is based on the training split and is aligned with the training clustering. We then replace the oracle numerator in \eqref{eq:augmented_oracle_weights} by \(\hat p_{Y\mid X}\), giving the estimated weights
\begin{equation}
    \label{eq:estimated_augmented_weights}
    \hat w_i(y)
    \propto
    \frac{\hat p_{Y\mid X}(\hat{\sigma}(Y_i)\mid X_i)}
    {\left[\mathcal A_{\cD_{aug}}(X_i)\right]_{Y_i}}
    \quad \text{for} ~ i \in \cI_{ca},
    \qquad
    \hat w(y)
    \propto
    \frac{\hat p_{Y\mid X}(\hat{\sigma}(y)\mid x)}
    {\left[\mathcal A_{\cD_{aug}}(x)\right]_y},
\end{equation}
which are normalized to \(\{w_i(y)\}_{i \in \cI_{ca}} \cup \{w(y)\}\) that sums to one. Here, $\hat{\sigma}$ is a permutation that aligns the augmented cluster labels to the training clustering and hence to $\hat{p}_{Y \mid X}$. After normalization, these weights are used in the weighted conformal distribution separately for each candidate label \(y\). Note that in \eqref{eq:estimated_augmented_weights}, even if the unnormalized calibration weights do not depend on \(y\), we keep $y$-dependence because they need to be normalized together with the test weight depending on \(y\). Algorithm~\ref{alg:estimated_weighted_conformal_clustering} summarizes the above procedure.

\setcounter{algorithm}{0} % Reset algorithm counter to 0
\renewcommand{\thealgorithm}{\arabic{algorithm}} % Restore normal numbering format
\begin{algorithm}[!htbp]
    \caption{Weighted Conformal Clustering (Augmented)}
    \label{alg:estimated_weighted_conformal_clustering}
    \begin{algorithmic}[1]
        \REQUIRE Covariates \(X_1, \ldots, X_n \in \R^p\), user-specified level \(\alpha \in (0, 1)\), test point $X_{n+1} = x$.
        \STATE Split the covariates into a training set \(\{X_i\}_{i \in \cI_{tr}}\) and a calibration set \(\{X_i\}_{i \in \cI_{ca}}\).
        \STATE Apply clustering to the training data \(\{X_i\}_{i \in \cI_{tr}}\) and obtain corresponding labels \(\{Y_i\}_{i \in \cI_{tr}}\).
        \STATE Fit a soft classifier \(\hat{\pi} \colon \R^p \to \Delta_K\) on the cluster-labeled training data \(\{(X_i, Y_i)\}_{i \in \cI_{tr}}\).
        \STATE Apply clustering to the augmented data \(\cD_{aug} := \cD_{ca} \cup \{x\}\) and obtain the labels $\{Y_i\}_{i \in \cI_{ca}}$ for the calibration data.
        \STATE Align the cluster and classification labels: for a suitable metric $\ell$ on $\Delta_K$, find 
        \begin{equation*}
            \hat{\sigma} \in {\textstyle\argmin_{\sigma \in \cS_K}} ~ {\textstyle \sum_{i \in \cI_{ca} \cup \{n + 1\}}} \ell(\sigma(\hat{\pi}(X_i)), \cA_{\cD_{aug}}(X_i)), \vspace*{-4pt}
        \end{equation*}
        where $\sigma(w) = (w_{\sigma(1)}, \ldots, w_{\sigma(K)})$ for any $w \in \Delta_K$ and $\sigma \in \cS_K$.
        \STATE Estimate a suitable conditional density $\hat p_{Y\mid X}$ based on the training split.
        \STATE For each candidate label \(y \in [K]\), calculate the calibration weights $w_i(y)$ for $i\in \cI_{ca}$ and the test weight $w(y)$ obtained by normalizing the weights in \eqref{eq:estimated_augmented_weights}.
        \STATE Define suitable conformity scores $S_i = s((X_i, \hat{\sigma}(Y_i)); \hat{\pi})$ for $i \in \cI_{ca}$ and construct
        \begin{equation*}
            \hat{\cC}(x) = \left\{y \in [K]: s((x, y); \hat{\pi}) \le \mathsf{Q}_{1 - \alpha}\left({\textstyle\sum_{i \in \cI_{ca}}} w_i(y) \, \delta_{S_i} + w(y) \delta_{\infty}\right)\right\}.
        \end{equation*}
        \vspace*{-10pt}
        \RETURN \(\hat{\cC}(x)\).
    \end{algorithmic}
\end{algorithm}

\begin{remark}
    In Algorithm~\ref{alg:estimated_weighted_conformal_clustering}, while $\hat{p}_{Y \mid X}$ can be any suitable estimator of $p_{Y \mid X}^\ast$ as long as it is trained on the training split, the most natural choice is to use the soft classifier $\hat{\pi}$ from Steps 2-3, namely, $\hat{p}_{Y \mid X}(\cdot \mid x) = \hat{\pi}(x)$. Also, note that Algorithm~\ref{alg:estimated_weighted_conformal_clustering} can accommodate hard clustering such as k-means. In this case, the augmented clustering fit produces degenerate assignment probabilities, with \([\mathcal A_{\cD_{aug}}(X_i)]_{Y_i}=1\), and $\hat{p}_{Y \mid X}$ can still be obtained by using the soft classifier $\hat{\pi}$ as explained earlier. Meanwhile, if one uses a generalizable soft clustering algorithm, the soft classifier $\hat{\pi}$ in Step 3 may be skipped since we can use it as $\hat{p}_{Y \mid X}$ directly. Lastly, note that the alignment in Step 5 is based on the augmented covariates and thus is different from the alignment in Step 5 of Algorithm 1 in \cite{hur2026inference}; also, we only use the covariates, not the obtained cluster labels.
\end{remark}

The following theorem derives a finite-sample bound on the coverage.

\begin{theorem}[Finite-sample coverage with estimated weights]
    \label{thm:coverage_estimated_weights}
    Suppose $X_1,\ldots,X_n \in \R^p$ are i.i.d. from $P^\ast_X$. Let \(\hat{\cC}\) be the output of Algorithm~\ref{alg:estimated_weighted_conformal_clustering}. Then, for $(X_{n+1},Y_{n+1}^\ast)\sim P^\ast$ independent of $\{X_i\}_{i=1}^n$, we have
    \[
        \P\left(\hat{\sigma}_o^\ast(Y_{n+1}^\ast)\in \hat{\cC}(X_{n+1})\right)
        \ge 1-\alpha - \E_{tr} \E_{X \sim P^\ast_X}\left[d_{\mathrm{TV}}\!\left(\widehat P_{Y\mid X}(\cdot\mid X), P_{Y\mid X}^\ast(\cdot\mid X)\right)\right],
    \]
    where $\widehat{P}_{Y \mid X}$ is the conditional label distribution having $\hat{p}_{Y \mid X}$ as the density, $d_{\mathrm{TV}}$ is the total variation distance over the simplex, $\E_{tr}$ is the expectation with respect to the training split, and $\E_{X \sim P_X^\ast}$ is the expectation with respect to $X \sim P_X^\ast$ independent of everything else.
\end{theorem}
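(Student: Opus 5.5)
We reduce to the oracle case by a change-of-measure argument, conditioning on the training split throughout. Fix the training data, hence $\hat\pi$, $\hat p_{Y\mid X}$ and the score function. Introduce a \emph{surrogate} target law $\widehat P(dx,dy)=P_X^\ast(dx)\sum_k \hat p_{Y\mid X}(k\mid x)\delta_k(dy)$, expressed in the training-aligned label coordinates. This law is what the estimated weights \eqref{eq:estimated_augmented_weights} are exact for.

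\textbf{Step 1: exact validity under the surrogate.} Repeat the argument behind Theorem~\ref{thm:exact_weights} for the augmented mechanism, with $p^\ast_{Y\mid X}$ replaced by $\hat p_{Y\mid X}$ (composed with $\hat\sigma$). The covariates $X_i$, $i\in\cI_{ca}\cup\{n+1\}$, are i.i.d.\ $P_X^\ast$. The calibration labels are drawn from $\mathrm{Cat}(\cA_{\cD_{aug}}(X_i))$, while the test label is imagined to be drawn from $\hat p_{Y\mid X}(\hat\sigma(\cdot)\mid X_{n+1})$. Using the general permutation-based weighted conformal theory of Appendix~\ref{appendix:wcp_theory}, we compute the joint density of the augmented sample under each permutation of the $|\cI_{ca}|+1$ points.
\begin{itemize}
\item Since $\cD_{aug}$ is an unordered set, the map $\cA_{\cD_{aug}}$ is invariant under swapping which point plays the role of the test point.
\item Hence the product $\prod_{j}[\cA_{\cD_{aug}}(X_j)]_{Y_j}$ over all $|\cI_{ca}|+1$ points is common to every permutation.
\item Each permutation therefore differs only by the factor $\hat p(\hat\sigma(Y_i)\mid X_i)/[\cA_{\cD_{aug}}(X_i)]_{Y_i}$ attached to the point placed in the test position.
\end{itemize}
These are exactly the normalized weights $w_i(y)$ and $w(y)$. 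The weighted-exchangeability lemma then gives
\[
\P_{\widehat P}\bigl(\tilde Y_{n+1}\in\hat\cC(X_{n+1})\mid \text{tr}\bigr)\ge 1-\alpha,
\]
where $\tilde Y_{n+1}\mid X_{n+1}\sim \widehat P_{Y\mid X}(\cdot\mid X_{n+1})$. One subtlety must be handled here: $\hat\sigma$ depends only on covariates, so it is a symmetric function of $\cD_{aug}$ and does not break the invariance.

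\textbf{Step 2: label alignment.} The quantity $\hat p_{Y\mid X}$ is aligned to the training clustering, while the target involves $\hat\sigma_o^\ast(Y^\ast_{n+1})$. We interpret the bound with $P^\ast_{Y\mid X}$ expressed in the same aligned coordinates, i.e.\ as the law of $\hat\sigma_o^\ast(Y^\ast)$. Because $\hat\sigma_o^\ast$ maximizes coverage over permutations, it suffices to exhibit a single permutation achieving the bound. We use the one under which $\hat p$ and $p^\ast$ are compared in the TV term.

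\textbf{Step 3: transfer to the true label.} Conditionally on the training split, on $X_{n+1}$ and on the calibration data, the set $\hat\cC(X_{n+1})$ is fixed. The true aligned label and the surrogate label have conditional laws $P^\ast_{Y\mid X}(\cdot\mid X_{n+1})$ and $\widehat P_{Y\mid X}(\cdot\mid X_{n+1})$ respectively. Consequently
\[
\bigl|\P(Y^\ast\in \hat\cC(X_{n+1})\mid\cdot)-\P(\tilde Y\in\hat\cC(X_{n+1})\mid\cdot)\bigr|\le d_{\mathrm{TV}}\bigl(\widehat P_{Y\mid X}(\cdot\mid X_{n+1}),P^\ast_{Y\mid X}(\cdot\mid X_{n+1})\bigr).
\]
We integrate first over the calibration data, then over $X_{n+1}\sim P_X^\ast$ (which is independent of the training split), then apply $\E_{tr}$. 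Combining with Step 1 yields the stated bound.

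\textbf{Main obstacle.} The delicate point is Step 1. The prediction set is built from calibration labels, and so is the augmented fit $\cA_{\cD_{aug}}$. One must verify that swapping the test label's law from the surrogate to the truth does not alter the calibration-label mechanism. This holds because the test label never enters the construction of $\hat\cC$: only $X_{n+1}$ does, through $\cD_{aug}$ and $\hat\sigma$. That fact is what allows the coupling in Step 3 to act only on the test label, conditional on everything else.
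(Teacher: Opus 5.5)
Your proposal is correct and follows essentially the same route as the paper. You reduce $\hat\sigma_o^\ast$ to a fixed permutation via optimality of the oracle alignment. You prove exact $1-\alpha$ coverage for an auxiliary process whose test label is drawn from $\widehat P_{Y\mid X}$, using that the single augmented fit makes $\prod_j[\cA_{\cD_{aug}}(X_j)]_{Y_j}$ common to all permutations and that $\hat\sigma$ is a symmetric function of the covariates. You then transfer to $Y^\ast_{n+1}$ via total variation and average over the training split.

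The only difference is cosmetic. You couple the two test labels conditionally on the calibration data and $X_{n+1}$, which gives $\E_{X\sim P_X^\ast}[d_{\mathrm{TV}}(\widehat P_{Y\mid X}(\cdot\mid X),P^\ast_{Y\mid X}(\cdot\mid X))]$ directly and makes explicit why the calibration data can be held fixed. The paper instead bounds by the joint distance $d_{\mathrm{TV}}(P_X^\ast\times\widehat P_{Y\mid X},P_X^\ast\times P^\ast_{Y\mid X})$ and then invokes Lemma~2 of \cite{hur2026inference}.
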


\begin{remark}
    Theorem~\ref{thm:coverage_estimated_weights} combines two ideas together: the augmentation mechanism of the calibration set and the estimation of the conditional density from the training set. Particularly, if we can perfectly estimate the oracle conditional density, namely, $\hat{p}_{Y \mid X} = p^\ast_{Y \mid X}$, then Algorithm~\ref{alg:estimated_weighted_conformal_clustering} has $1 - \alpha$ coverage.
\end{remark}

Theorem~\ref{thm:coverage_estimated_weights} shows that the coverage loss of the estimated-weight procedure is controlled by the error in estimating the latent conditional label probabilities. Unlike \cite{hur2026inference}, our framework does not introduce an additional replace-one stability term, since the augmented calibration construction removes the instability from repeated leave-one-out clustering refits. The remaining requirement is consistency of the estimated conditional label law, which is intrinsic when targeting unobserved latent labels in an unsupervised setting. However, the two guarantees are not directly comparable, as our bound concerns out-of-sample conditional estimation error, while \cite{hur2026inference} studies the in-sample behavior of stochastic split conformal clustering. Nevertheless, under correctly specified smooth parametric mixtures, standard identifiability and regularity conditions yield consistent conditional label estimation, so both methods attain asymptotic validity. Beyond these classical mixture models, our weighted framework can accommodate more flexible conditional probability estimators, enabling broader clustering regimes; the next section illustrates this advantage in nonlinear examples, where our method maintains coverage with more informative prediction sets.

While Theorem~\ref{thm:coverage_estimated_weights} emphasizes consistent estimation, exact model specification is not required: under misspecification, the coverage loss is governed by the discrepancy between the limiting estimated conditional law and the oracle law, which can be offset by a conservative choice of the nominal level. Although the framework can be applied to hard clustering methods such as k-means, their degenerate one-hot assignments cannot consistently recover a smooth latent conditional label distribution (see Remark~2 of \cite{hur2026inference}). So the undercoverage term in Theorem~\ref{thm:coverage_estimated_weights} generally does not necessarily vanish when using strictly deterministic assignments. Therefore, to fully capture assignment uncertainty and fully realize the theoretical guarantees afforded by this framework, we recommend using stochastic soft-clustering algorithms.

\section{Empirical Studies}

\subsection{Simulations}

\paragraph{Gaussian Mixture Models} We simulate data from a \(K = 5\) component Gaussian mixture model (GMM) with equal mixing proportions in low-dimensional (\(\R^2\)) and high-dimensional (\(\R^{50}\)) settings. Mixture centers are fixed, sharing the common covariance matrix $\sigma^2 I_p$. For \(\alpha = 0.1\), we evaluate the coverage and set size of Algorithm \ref{alg:estimated_weighted_conformal_clustering} for varying $\sigma^2$ with fixed \(n = 1200\) (\(n = 3000\) for \(\R^{50}\)) and for varying sample size with fixed \(\sigma^2 = 2.6\) (\(\sigma^2 = 2.3\) for \(\R^{50}\)). Algorithm \ref{alg:estimated_weighted_conformal_clustering} is compared with the split conformal clustering (CC) method \cite{hur2026inference} and the naive cutoff method, which cuts the estimated soft labels to $1 - \alpha$. For \(\R^2\), the stochastic version of GMM clustering is used for both Algorithm \ref{alg:estimated_weighted_conformal_clustering} and split CC. For \(\R^{50}\), we use spectral clustering, which first applies singular value decomposition to the data and then applies stochastic clustering to the top \(K\) singular vectors. We use k-nearest neighbor classification (with k \(= \sqrt{n_{tr}}\)) and generalized inverse quantile \cite{romano2020classification} as the conformity score.

\begin{figure}[!ht]
    \centering
    \includegraphics[width=\linewidth]{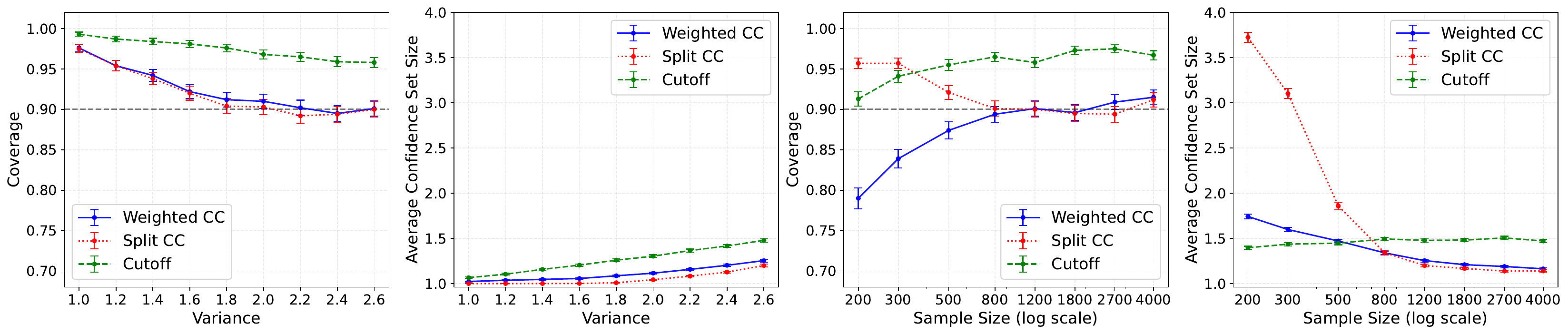}
    \includegraphics[width=\linewidth]{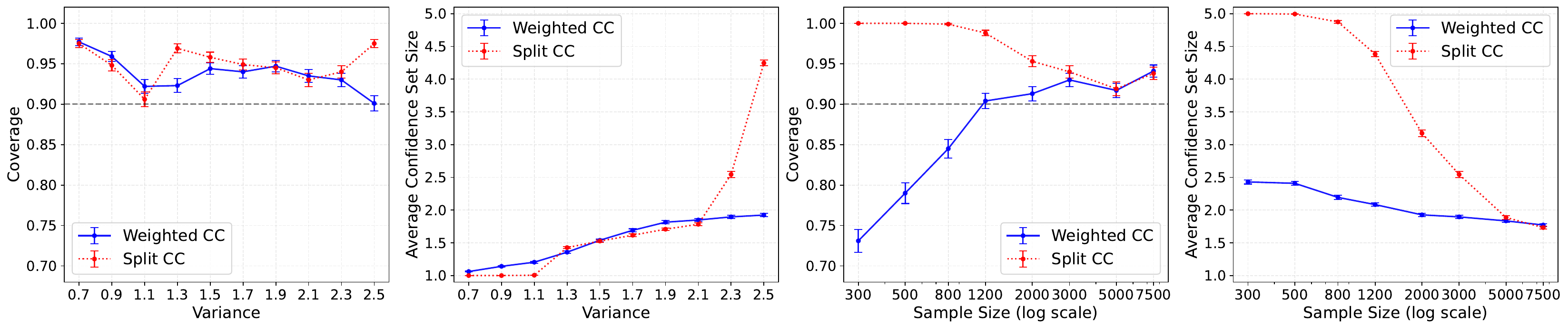}
    \vspace*{-10pt}
    \caption{Coverage and confidence set size for GMM simulations with $K = 5$ components for varying noise level $\sigma^2$ and sample size \(n\). Top row: GMM in 2D with stochastic GMM clustering. Bottom row: GMM in 50D with spectral clustering that applies stochastic GMM clustering to the top \(K\) singular vectors. Algorithm \ref{alg:estimated_weighted_conformal_clustering} is compared with the split conformal clustering (CC) and naive cutoff methods.}
    \label{fig:GMM}
\end{figure}

The top row of Figure \ref{fig:GMM} shows the results in \(\R^2\), plotting the coverage and set size as a function of \(\sigma^2\) or the sample size \(n\). We can see that both Algorithm \ref{alg:estimated_weighted_conformal_clustering} and split CC achieve valid coverage with informative set size. Meanwhile, as noted in \cite{hur2026inference}, the naive cutoff method results in larger sets because, by construction, it has to include less probable labels to achieve the target coverage. The bottom row shows the results in \(\R^{50}\) with spectral clustering. The first two plots show that Algorithm \ref{alg:estimated_weighted_conformal_clustering} remains valid and informative for the largest variances $\sigma^2 \in \{2.3, 2.5\}$, while split CC becomes oversized and hence uninformative. The last two plots show that Algorithm \ref{alg:estimated_weighted_conformal_clustering} starts to achieve valid coverage with informative set size at \(n = 1200\), while the set size of split CC remains large even at \(n = 3000\) and starts to match that of Algorithm \ref{alg:estimated_weighted_conformal_clustering} at \(n = 5000\). These results illustrate that Algorithm \ref{alg:estimated_weighted_conformal_clustering} can provide more informative uncertainty quantification for spectral clustering in high-dimensional settings.

\paragraph{Nonlinear Clusters} We consider two nonlinear clustering examples based on three moons and two concentric circles. In both examples, each component lies on a nonlinear manifold, which is not linearly separable. We apply spectral clustering that applies stochastic GMM clustering to the top \(K\) singular vectors of the Laplacian of the mutual k-nearest neighbor graph (with k = 5\% of \(n_{tr}\)). For these examples, the variance \(\sigma^2\) denotes the spread (thickness) of the manifold around the ideal nonlinear structure. We, for varying \(\sigma^2\) (with fixed \(n = 600\) for moons, \(n = 400\) for circles) or varying \(n\) (with fixed \(\sigma^2 = 1.8\) for moons, \(\sigma^2 = 1.05\) for circles), we compare the coverage and set size of Algorithm \ref{alg:estimated_weighted_conformal_clustering} and split CC.

The top row of Figure \ref{fig:2D_non_linear} visualizes the confidence sets produced by Algorithm \ref{alg:estimated_weighted_conformal_clustering} as heatmaps. Simulated points are overlaid with the heatmaps that color each point \(x \in \R^2\) by a mixture of the colors corresponding to the labels in the confidence set \(\hat{\cC}(x)\). In both examples, we see that the sets become larger for larger $\sigma^2$ as the gap between the manifolds becomes smaller, making the clustering task more difficult and thus increasing the uncertainty. Meanwhile, the middle and bottom rows of Figure \ref{fig:2D_non_linear} show that both Algorithm \ref{alg:estimated_weighted_conformal_clustering} and split CC have almost perfect coverage with informative set size for sufficiently small $\sigma^2$, while the coverage drops for larger $\sigma^2$ as clustering becomes more challenging. Interestingly, Algorithm \ref{alg:estimated_weighted_conformal_clustering} maintains the informative set size for a wider range of $\sigma^2$ while still achieving valid coverage, but split CC fails to produce informative sets for larger $\sigma^2$. Also, we can see that both methods achieve asymptotically valid coverage with decreasing set size as \(n\) increases, but Algorithm \ref{alg:estimated_weighted_conformal_clustering} achieves this faster than split CC, as also observed in the spectral clustering simulations for the GMM in \(\R^{50}\).

\begin{figure}[!ht]
    \centering
    \includegraphics[width=0.5\linewidth]{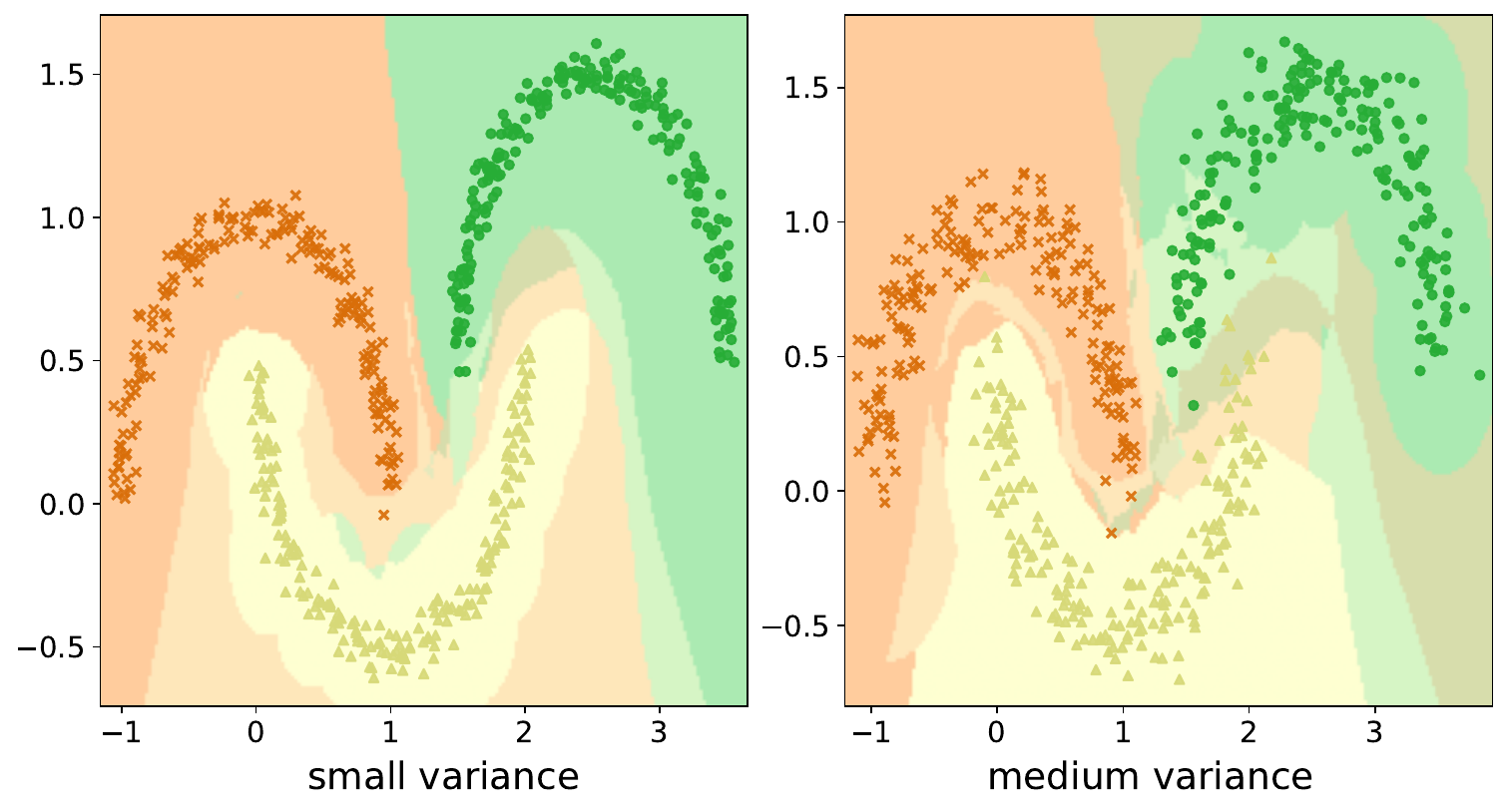}
    \hspace*{-7pt}
    \includegraphics[width=0.5\linewidth]{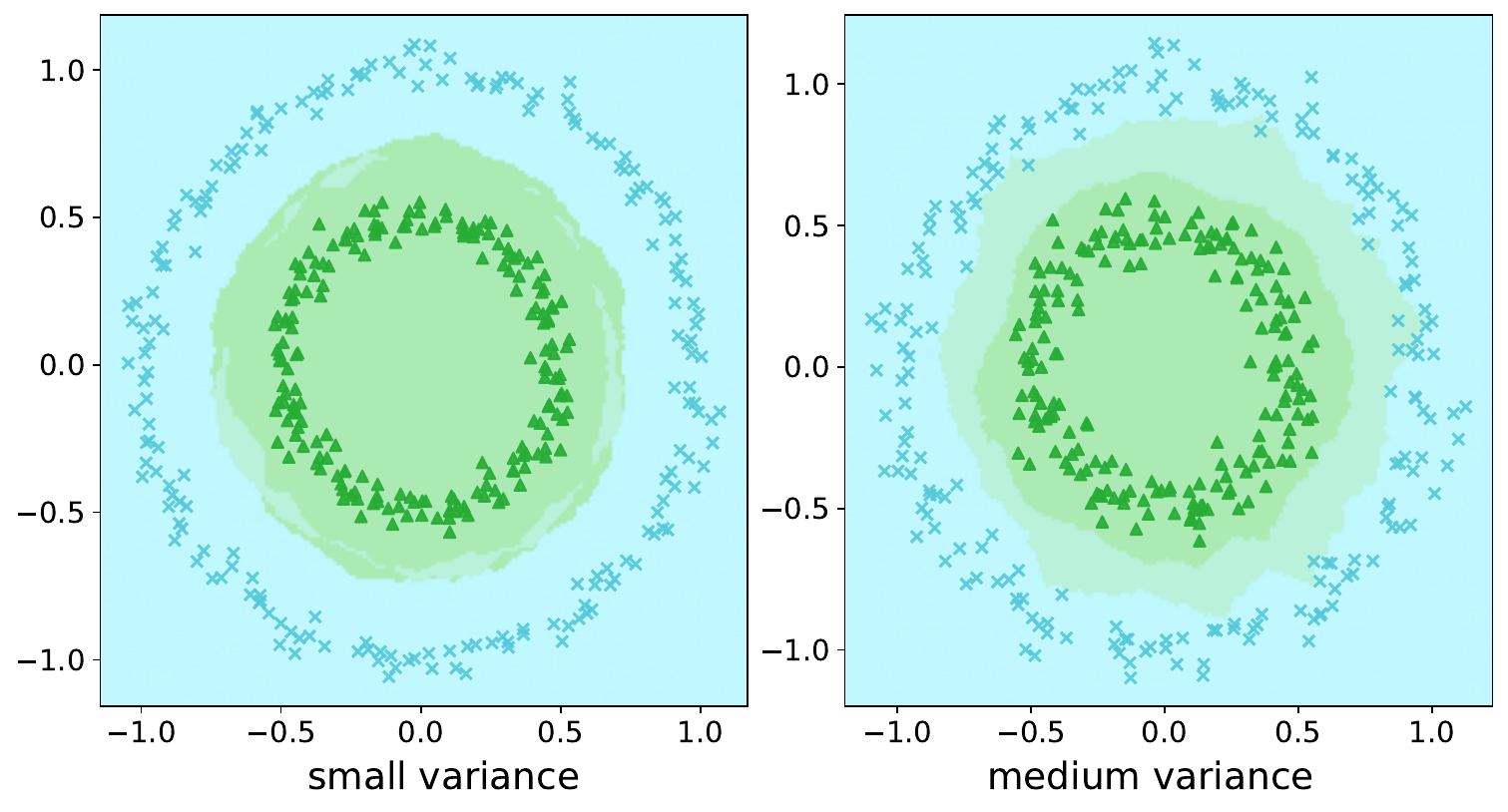}
    \includegraphics[width=\linewidth]{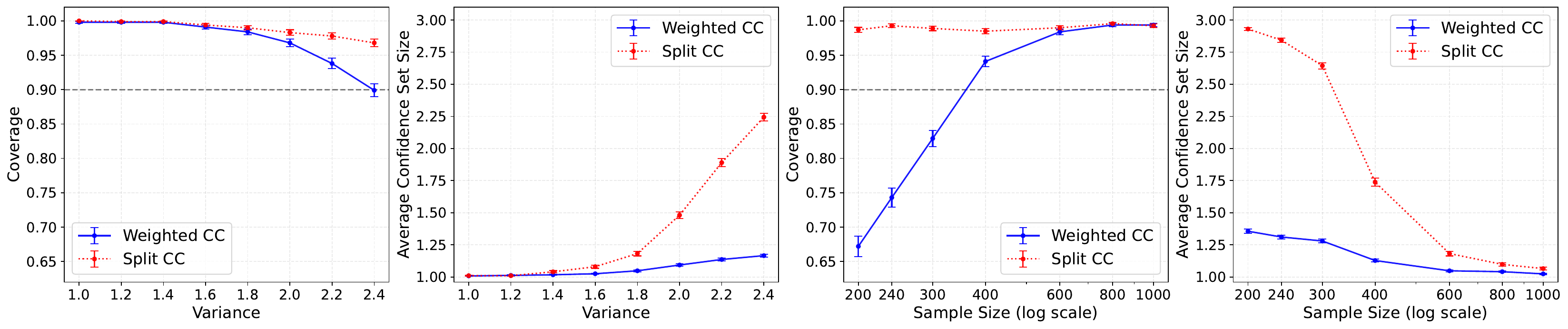}
    \includegraphics[width=\linewidth]{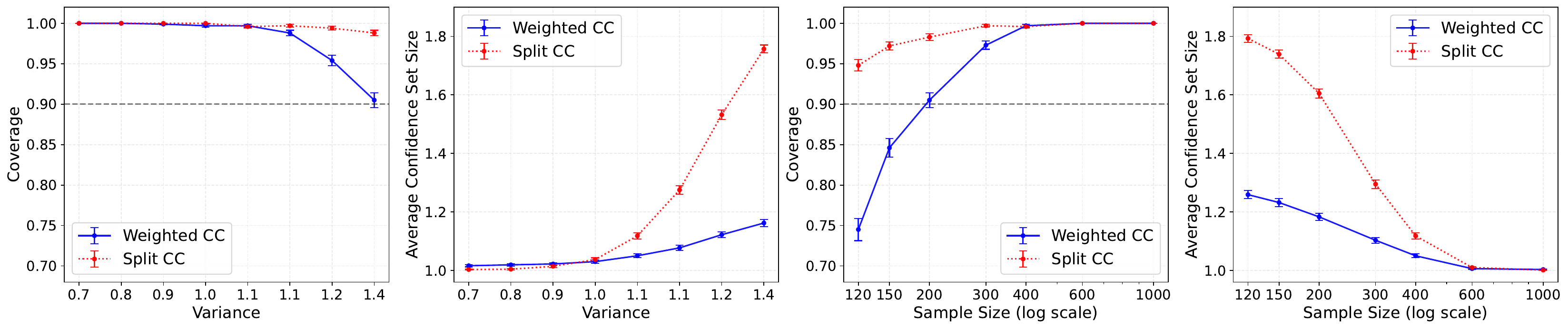}
    \vspace*{-10pt}
    \caption{Confidence set heatmaps and coverage/set size for the nonlinear clustering simulations. Top row: confidence sets produced by Algorithm \ref{alg:estimated_weighted_conformal_clustering} for moons (\(\sigma^2 \in \{1.0, 2.2\}\)) and circles (\(\sigma^2 \in \{0.75, 1.25\}\)). Middle row: coverage and set size for the moon example for varying $\sigma^2$ and for varying \(n\). Bottom row: coverage and set size for the circle example. Algorithm \ref{alg:estimated_weighted_conformal_clustering} and split CC are implemented with spectral clustering that applies stochastic GMM clustering to the top \(K\) singular vectors of the Laplacian of the mutual nearest neighbor graph.}
    \label{fig:2D_non_linear}
\end{figure}

In summary, Algorithm \ref{alg:estimated_weighted_conformal_clustering} provides valid and more informative confidence sets for nonlinear clustering compared to split CC, confirming that the weighted conformal framework brings benefits to a wide range of clustering problems beyond the standard mixture model setting considered in \cite{hur2026inference}.

\subsection{MNIST Application with Deep Clustering}
We apply Algorithm \ref{alg:estimated_weighted_conformal_clustering} with deep clustering to the MNIST dataset, which consists of images of handwritten digits (0--9) in \(\R^{28 \times 28}\). Deep clustering \cite{xie2016unsupervised,caron2018deep,zhou2024comprehensive}, which combines deep representation learning with clustering, has been shown to be effective for clustering high-dimensional data such as images. We first pretrain a convolutional autoencoder to learn a low-dimensional (10D) representation. We focus on 10,000 images not used for pretraining, which we split into training (4,500), calibration (4,500), and test (1,000) sets. For each test image, we produce a confidence set using Algorithm \ref{alg:estimated_weighted_conformal_clustering} with stochastic GMM clustering applied in the 10D representation space with $\alpha = 0.1$.

The left panel of Figure \ref{fig:mnist} plots the true digit labels for the training and calibration sets in the t-SNE space. The right panel visualizes the confidence sets for the test set, where each point is colored by its true label and marked by its set size (circle, triangle, square for size 1, 2, 3+). Most of them are singleton confidence sets (45.3\%) and tend to be located near the centers of the clusters, while the rest with size 2 (30.4\%) or 3+ (24.3\%) tend to appear near the boundaries between clusters or in more ambiguous regions. Several ambiguous points with large confidence sets are shown with their corresponding images, which are indeed tricky to classify even for humans. This application demonstrates that Algorithm \ref{alg:estimated_weighted_conformal_clustering} can provide informative uncertainty quantification with deep clustering for high-dimensional data, extending its utility beyond standard mixture models and spectral clustering.

\begin{figure}
    \centering\includegraphics[width=\linewidth]{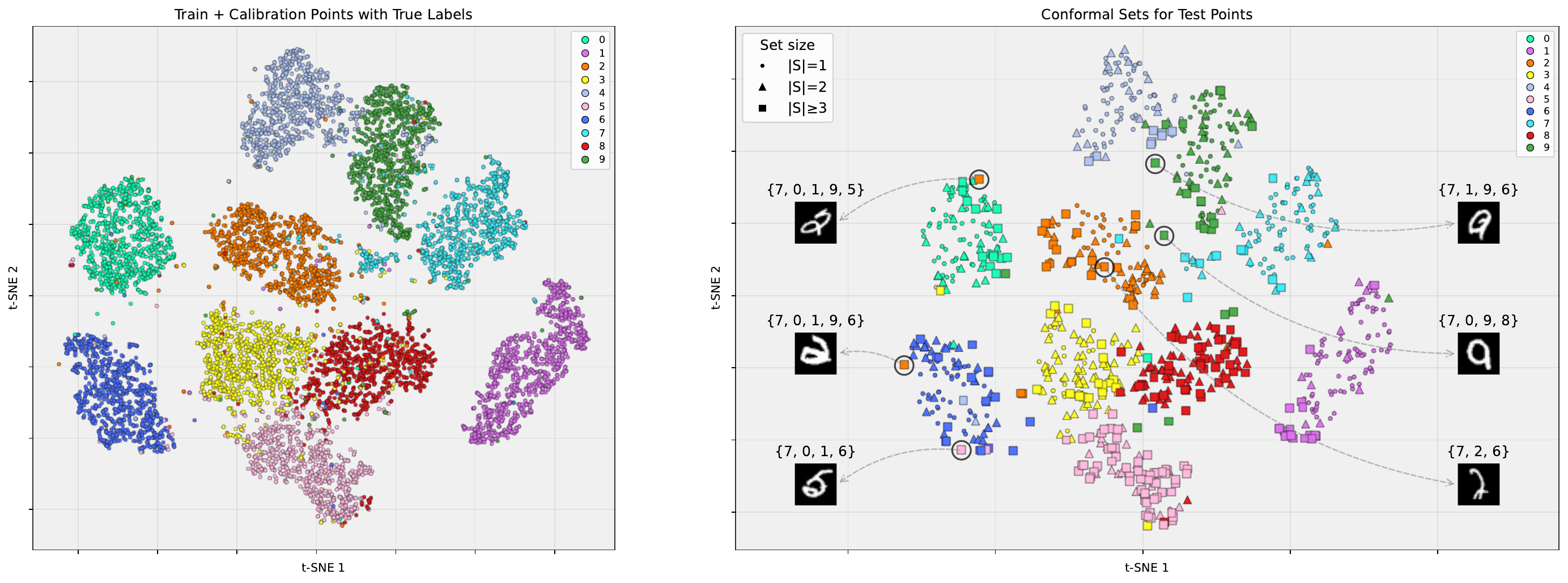}
    \vspace*{-10pt}
    \caption{True labels and confidence sets for the MNIST dataset. Left: true digit labels for the training and calibration sets in the t-SNE space. Right: confidence sets produced by Algorithm \ref{alg:estimated_weighted_conformal_clustering} for the test set, visualized with different markers based on the set size and colored by the true digit labels. Several ambiguous points with large confidence sets are shown with their corresponding images.}
    \label{fig:mnist}
\end{figure}

\section{Discussion}
This paper introduces weighted conformal clustering, an innovative framework that extends the conformal clustering approach of \cite{hur2026inference} by treating synthetic clustering labels as inducing a conditional label-distribution shift from the latent target labels. Using weighted conformal prediction beyond exchangeability, we develop a tractable algorithm with asymptotic coverage under consistent conditional-density estimation, and show empirically that it yields smaller confidence sets in high-dimensional and nonlinear clustering while matching existing methods in smooth mixture settings.
This represents a substantive step forward for the emerging conformal clustering framework, with potential implications for conformal prediction in unsupervised learning beyond clustering. A key ingredient is the novel augmented calibration mechanism, which makes weighted conformal prediction computationally tractable while combining aspects of both full and split conformal procedures. This suggests augmented calibration as a broader methodological principle as a whole for more interesting applications in the future.

\section*{Acknowledgments} 
The authors acknowledge funding from NSF DMS-2516872.

\bibliographystyle{plain}
\bibliography{ref}

%%%%%%%%%%%%%%%%%%%%%%%%%%%%%%%%%%%%%%%%%%%%%%%%%%%%%%%%%%%%

\appendix

\section{Theory of Weighted Conformal Prediction}
\label{appendix:wcp_theory}

\begin{definition}
    \label{def:quantile}
    For any measure $\mu$ on $\R$, let $\mathsf{Q}_\tau(\mu)$ be the $\tau$-th quantile of $\mu$ defined by
    \begin{equation*}
        \mathsf{Q}_\tau(\mu) = \inf\{t \in \R : \mu((-\infty, t]) \ge \tau\} \quad \forall \tau \in [0, 1].
    \end{equation*}
    For $\{s_1, \ldots, s_n\} \subset \R$, we abbreviate $\mathsf{Q}_\tau\left(\frac{1}{n} \sum_{i = 1}^{n} \delta_{s_i}\right)$ as $\mathsf{Q}_\tau(\{s_i\}_{i = 1}^{n})$.
\end{definition}

\subsection{Full Conformal Prediction with Weighting}
We state the weighted conformal procedure, where $(X_1, Y_1), \ldots, (X_{n + 1}, Y_{n + 1})$ are from an arbitrary distribution $P_{n + 1}$ on $(\cX \times \cY)^{n + 1}$, which may not be a product distribution. The high-level idea is explained in Section 7.5 of \cite{angelopoulos2024theoretical}. We review this here for completeness.

Recall that the full conformal prediction set can be rewritten as
\begin{equation*}
    C_n(x) = \left\{y \in \cY : s(x, y; \cD_n^{(x, y)}) \le \mathsf{Q}_{1 - \alpha}\left(\frac{\delta_{s(x, y; \cD_n^{(x, y)})} + \sum_{i = 1}^{n} \delta_{s(X_i, Y_i; \cD_n^{(x, y)})}}{n + 1}\right)\right\},
\end{equation*}
where $\cD_n^{(x, y)} = \{(X_1, Y_1), \ldots, (X_n, Y_n), (x, y)\}$ and $s \colon \cX \times \cY \times 2^{\cX \times \cY} \to \R$ is a conformity score function.\footnote{The second argument of $s$ is an unordered subset of $\cX \times \cY$, meaning that $s$ ignores the order of its arguments.} The full conformal prediction set above no longer satisfies the coverage guarantee because the exchangeability assumption is violated. Particularly, the score $s(X_{n + 1}, Y_{n + 1}; \cD_n^{(X_{n + 1}, Y_{n + 1})})$ may tend to be larger than the scores of the training data because of distribution shift, and hence the prediction set will tend to under-cover. 

Let \(Z_1,\ldots,Z_n,Z_{n+1}\in\mathcal Z := \cX \times \cY\) have joint density \(f\) on \(\mathcal Z^{n+1}\), without assuming exchangeability, and let \(\hat P_{n+1}=\{Z_1,\ldots,Z_{n+1}\}\) denote the unordered empirical measure. Conditional on \(\hat P_{n+1}\), the identity of the test point is determined by the relative likelihoods of the different orderings of these \(n+1\) observations under \(f\). Now, fix a candidate label \(y\), and form the augmented sample by setting \(Z_i^y=Z_i\) for \(i\in[n]\) and \(Z_{n+1}^y=(X_{n+1},y)\). The following theorem states the general weighted conformal procedure and its marginal validity.

\begin{theorem}
    \label{thm:weighted_conformal_permutation}
    Suppose $(X_1, Y_1), \ldots, (X_{n + 1}, Y_{n + 1})$ are drawn from some distribution $P_{n + 1}$ on $(\cX \times \cY)^{n + 1}$ whose probability density function is $f \colon (\cX \times \cY)^{n + 1} \to \R_+$. For $(x, y) \in \cX \times \cY$, let $Z_i^{(x, y)} := (X_i, Y_i)$ for $i \in [n]$ and $Z_{n + 1}^{(x, y)} := (x, y)$, and for any permutation $\sigma$ of $[n + 1]$, define
    \begin{equation*}
        w_\sigma^{(x, y)} := \frac{f(Z_{\sigma(1)}^{(x, y)}, \ldots, Z_{\sigma(n + 1)}^{(x, y)})}{\sum_{\sigma' \in \cS_{n + 1}} f(Z_{\sigma'(1)}^{(x, y)}, \ldots, Z_{\sigma'(n + 1)}^{(x, y)})}.
    \end{equation*} 
    Let $s \colon \cX \times \cY \times 2^{\cX \times \cY} \to \R$ be a fixed function, where the second argument of $s$ is an unordered subset of $\cX \times \cY$. Define
    \begin{equation*}
        C_n(x) = \left\{y \in \cY : s(x, y; \cD_n^{(x, y)}) \le \mathsf{Q}_{1 - \alpha}\left(\sum_{\sigma \in \cS_{n + 1}} w_\sigma^{(x, y)} \delta_{s(Z_{\sigma(n + 1)}^{(x, y)}; \cD_n^{(x, y)})}\right)\right\},
    \end{equation*}
    where $\cD_n^{(x, y)} = \{(X_1, Y_1), \ldots, (X_n, Y_n), (x, y)\}$. Then, we have
    \begin{equation*}
        \P(Y_{n + 1} \in C_n(X_{n + 1})) \ge 1 - \alpha.
    \end{equation*}
\end{theorem}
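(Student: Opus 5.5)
The plan is the classical exchangeability-by-conditioning argument, with permutation likelihoods replacing uniform weights. Write $Z_i=(X_i,Y_i)$ for $i\in[n+1]$ and $Z=(Z_1,\ldots,Z_{n+1})\sim f$. Let $E$ denote the unordered multiset $\{Z_1,\ldots,Z_{n+1}\}$. Since $s(\cdot;\cD)$ only depends on the unordered set $\cD_n^{(X_{n+1},Y_{n+1})}=E$, the scores $V_i:=s(Z_i;E)$ are fixed functions of the values in $E$ once $E$ is fixed. The event $Y_{n+1}\in C_n(X_{n+1})$ is the event that
\[
V_{n+1}\le \mathsf{Q}_{1-\alpha}\Bigl(\sum_{\sigma\in\cS_{n+1}} w_\sigma\,\delta_{V_{\sigma(n+1)}}\Bigr),
\]
where $w_\sigma=w_\sigma^{(X_{n+1},Y_{n+1})}$ is computed from $Z$ itself. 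So the task is to show this has probability at least $1-\alpha$.

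The key step is identifying the conditional law of the ordering given $E$. Suppose first that the points are distinct and let $z=(z_1,\ldots,z_{n+1})$ be any realization. Then, conditional on $E=\{z_1,\ldots,z_{n+1}\}$, the vector $Z$ equals $(z_{\sigma(1)},\ldots,z_{\sigma(n+1)})$ with probability
\[
\frac{f(z_{\sigma(1)},\ldots,z_{\sigma(n+1)})}{\sum_{\sigma'} f(z_{\sigma'(1)},\ldots,z_{\sigma'(n+1)})}.
\]
I would prove this by checking, for any bounded $g$ and any symmetric $h$, that $\E[g(Z)h(E)]$ equals the expectation of $h(E)$ times this weighted average of $g$ over orderings. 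The verification is a change of variables $z\mapsto z\circ\sigma$ in $\int g(z)h(\{z\})f(z)\,dz$, averaged over $\sigma$. Ties are handled by treating permutations as acting on indices, so coincident values simply contribute repeated weights; this needs no distinctness assumption.

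From this, the conditional distribution of $V_{n+1}=s(Z_{n+1};E)$ given $E$ is exactly $\mu_E:=\sum_\sigma \tilde w_\sigma\,\delta_{s(z_{\sigma(n+1)};E)}$, with the weights taken relative to a fixed enumeration $z$ of $E$. I then need to check that the measure in the algorithm, built from the observed ordering $Z$, coincides with $\mu_E$. Reindexing $\sigma\mapsto\sigma\circ\pi$ shows that the weighted sum over all permutations does not depend on which enumeration of $E$ is used, so the two measures agree. Consequently the threshold $q:=\mathsf{Q}_{1-\alpha}(\mu_E)$ is $E$-measurable.

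The standard quantile fact then closes the argument. If $V\sim\mu$, then $\P(V\le \mathsf{Q}_{1-\alpha}(\mu))=\mu((-\infty,\mathsf{Q}_{1-\alpha}(\mu)])\ge 1-\alpha$, using the right-continuity of the CDF together with Definition~\ref{def:quantile}. Applying this conditionally gives $\P(V_{n+1}\le q\mid E)\ge 1-\alpha$, and taking expectations yields the claim. The main obstacle is the rigorous conditional-law step: the densities must be taken with respect to a product reference measure that is permutation invariant (for example, Lebesgue measure times counting measure), and the set where $\sum_{\sigma'} f=0$ must be handled, though that set has probability zero. I would carry out this step first, through the $g,h$ integral identity, to keep the measure-theoretic details clean.
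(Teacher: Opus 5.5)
Your proposal is correct and follows essentially the same route as the paper: condition on the unordered sample, identify the conditional law of the test score as the permutation-likelihood-weighted measure (the paper's Lemma~\ref{lem:conditional_distribution_permutation}), then finish with the weighted-quantile inequality (Lemma~\ref{lem:weighted_quantile_rank}) and the tower property. Your $g,h$ integral identity proves the density case directly, where the paper proves only the finite case and cites \cite{angelopoulos2024theoretical}; your enumeration-invariance check makes explicit that the threshold $q$ is measurable with respect to the unordered sample, which the paper uses implicitly.
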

\begin{proof}
    Let $\cD_{n + 1} = \cD_n^{(X_{n + 1}, Y_{n + 1})}$. Let $S_i = s(X_i, Y_i; \cD_{n + 1})$ for $i = 1, \ldots, n + 1$. Then, by definition, we have
    \begin{equation*}
        Y_{n + 1} \in C_n(X_{n + 1}) \Leftrightarrow S_{n + 1} \le \mathsf{Q}_{1 - \alpha}\left(\sum_{\sigma \in \cS_{n + 1}} w_{\sigma} \delta_{S_{\sigma(n + 1)}}\right) =: q,
    \end{equation*}
    where $w_\sigma = w_\sigma^{(X_{n + 1}, Y_{n + 1})}$. By Lemma \ref{lem:conditional_distribution_permutation}, we have
    \begin{equation*}
        S_{n + 1} \, \Big| \, \frac{1}{n + 1} \sum_{i = 1}^{n + 1} \delta_{(X_i, Y_i)} \sim \sum_{\sigma \in \cS_{n + 1}} w_\sigma \delta_{S_{\sigma(n + 1)}}.
    \end{equation*}
    Hence, 
    \begin{equation*}
        \P\left(S_{n + 1} \le q \, | \, \frac{1}{n + 1} \sum_{i = 1}^{n + 1} \delta_{(X_i, Y_i)}\right) = \sum_{\sigma \in \cS_{n + 1}} w_\sigma 1\{S_{\sigma(n + 1)} \le q\} \ge 1 - \alpha,
    \end{equation*}
    where the last inequality is due to Lemma \ref{lem:weighted_quantile_rank}. By the law of total probability, we have
    \begin{equation*}
        \P(Y_{n + 1} \in C_n(X_{n + 1})) \ge 1 - \alpha.
    \end{equation*}
\end{proof}

\subsection{Technical Lemmas}

\begin{lemma}
    \label{lem:weighted_quantile_rank}
    For any $z_1, \ldots, z_n \in \R$, weights $w_1, \ldots, w_n \ge 0$ with $\sum_{i = 1}^{n} w_i = 1$, and $\tau \in [0, 1]$, we have
    \begin{equation*}
        \sum_{i = 1}^{n} w_i 1\left\{z_i \le \mathsf{Q}_\tau\left(\sum_{j = 1}^{n} w_j \delta_{z_j}\right)\right\} \ge \tau.
    \end{equation*}
    Suppose we swap $z_n$ with any $z \in \R \cup \{\infty\}$ such that $z \ge \max_{i \in [n]} z_i$. Then, we have
    \begin{equation*}
        z_n \le \mathsf{Q}_\tau\left(\sum_{i = 1}^{n} w_i \delta_{z_i}\right) \quad \Leftrightarrow \quad z_n \le \mathsf{Q}_\tau\left(\sum_{i = 1}^{n - 1} w_i \delta_{z_i} + w_n \delta_z\right). 
    \end{equation*}
\end{lemma}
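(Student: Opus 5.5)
Both claims of Lemma~\ref{lem:weighted_quantile_rank} follow directly from Definition~\ref{def:quantile}, using only that the weighted empirical measure $\mu=\sum_{j} w_j\delta_{z_j}$ has a right-continuous, piecewise-constant distribution function.

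\textbf{First claim.} Put $q=\mathsf{Q}_\tau(\mu)$ and note that
\begin{equation*}
    \sum_{i} w_i 1\{z_i\le q\}=\mu((-\infty,q]).
\end{equation*}
Because $\mu$ is a probability measure with finitely many atoms, the set $\{t:\mu((-\infty,t])\ge\tau\}$ is nonempty. It contains $\max_i z_i$ since $\mu((-\infty,\max_i z_i])=1\ge\tau$. It is also bounded below unless $\tau=0$. If $\tau=0$, the claim is trivial because the left side is nonnegative; I adopt the convention that $\mathsf{Q}_0$ may be $-\infty$.

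For $\tau>0$, the function $t\mapsto\mu((-\infty,t])$ is a right-continuous step function, so the infimum defining $q$ is attained. This gives $\mu((-\infty,q])\ge\tau$. In fact $q$ equals one of the atoms $z_i$, since the distribution function only jumps at atoms.

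\textbf{Second claim.} Let $\mu'=\sum_{i<n}w_i\delta_{z_i}+w_n\delta_z$ with $z\ge\max_i z_i$, and set $q=\mathsf{Q}_\tau(\mu)$ and $q'=\mathsf{Q}_\tau(\mu')$.

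Moving the atom at $z_n$ rightward to $z$ gives, for every $t$,
\begin{equation*}
    \mu'((-\infty,t])\le\mu((-\infty,t]),
\end{equation*}
with equality whenever $t<z_n$ or $t\ge z$. Hence $q'\ge q$, which yields the forward implication $z_n\le q\Rightarrow z_n\le q'$.

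For the converse, suppose $z_n\le q'$ but $z_n>q$. I will show this is impossible by showing $q'=q$. For all $t<z_n$ the two distribution functions agree. In particular they agree at $t=q<z_n$, so $\mu'((-\infty,q])=\mu((-\infty,q])\ge\tau$. Therefore $q'\le q$, and with $q'\ge q$ we get $q'=q<z_n$, contradicting $z_n\le q'$. So $z_n\le q'$ implies $z_n\le q$.

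\textbf{Edge case $z=\infty$.} The measure $\mu'$ then puts mass on $\infty$. The same comparison still holds, because for finite $t$ only the atom at $z_n$ is lost from the mass below $t$. I will state explicitly that the quantile is allowed to take the value $+\infty$ when $\mu'((-\infty,t])<\tau$ for all finite $t$.

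\textbf{Main obstacle.} The only delicate point is the bookkeeping of conventions: the value $\infty$ and the case $\tau=0$. I would state these conventions at the start. The core argument is simply that the two distribution functions coincide below $z_n$.
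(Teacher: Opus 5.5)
Your proof is correct and follows the paper's argument: right-continuity of the weighted CDF gives the first claim, and for the second you use $q\le q^z$ for the forward direction and the agreement of the two CDFs on $t<z_n$ for the converse. The differences are cosmetic: you phrase the converse as a contradiction by evaluating at $t=q$, whereas the paper shows directly that $F(t)=F^z(t)<\tau$ for all $t<z_n$, and you state the $\tau=0$ and $z=\infty$ conventions explicitly, which the paper leaves implicit.
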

\begin{proof}
    Let $F$ be the CDF of the measure $\sum_{i = 1}^{n} w_i \delta_{z_i}$, i.e., $F(t) = \sum_{i = 1}^{n} w_i 1\{z_i \le t\}$ for $t \in \R$. By definition, we have $\mathsf{Q}_\tau\left(\sum_{i = 1}^{n} w_i \delta_{z_i}\right) = \inf\{t \in \R : F(t) \ge \tau\}$. By the right-continuity of $F$, we can deduce that $F\left(\mathsf{Q}_\tau\left(\sum_{i = 1}^{n} w_i \delta_{z_i}\right)\right) \ge \tau$, which is exactly the desired result. Next, for $z \ge \max_{i \in [n]} z_i$, we have 
    \begin{equation*}
        q := \mathsf{Q}_\tau\left(\sum_{i = 1}^{n} w_i \delta_{z_i}\right) \le \mathsf{Q}_\tau\left(\sum_{i = 1}^{n - 1} w_i \delta_{z_i} + w_n \delta_z\right) =: q^z.
    \end{equation*}
    Hence, $z_n \le q$ implies $z_n \le q^z$. Conversely, suppose $z_n \le q^z$. Let $F^z$ be the CDF of the measure $\sum_{i = 1}^{n - 1} w_i \delta_{z_i} + w_n \delta_z$. Since $z_n \le q^z$, we must have $F^z(t) < \tau$ for any $t < z_n$; also, we have $F(t) = F^z(t)$ for any $t < z_n$ because $z_n \le z$ implies
    \begin{align*}
        F(t) & = \sum_{i = 1}^{n - 1} w_i 1\{z_i \le t\} + w_n 1\{z_n \le t\} = \sum_{i = 1}^{n - 1} w_i 1\{z_i \le t\}, \\
        F^z(t) & = \sum_{i = 1}^{n - 1} w_i 1\{z_i \le t\} + w_n 1\{z \le t\} = \sum_{i = 1}^{n - 1} w_i 1\{z_i \le t\}.
    \end{align*}
    Hence, for any $t < z_n$, we have $F(t) = F^z(t) < \tau$, which implies $z_n \le q$.
\end{proof}
\begin{lemma}
    \label{lem:conditional_distribution_permutation}
    Suppose $\cZ$ is a finite set. Let $P_n$ be a distribution on $\cZ^n$ with the probability mass function $f \colon \cZ^n \to \R_+$. For $(Z_1, \ldots, Z_n) \sim P_n$, we have 
    \begin{equation*}
        Z_i \, | \, \{Z_1, \ldots, Z_n\} = \{z_1, \ldots, z_n\} \sim \frac{1}{\sum_{\sigma \in \cS_n} f(z_{\sigma(1)}, \ldots, z_{\sigma(n)})} \sum_{\sigma \in \cS_n} f(z_{\sigma(1)}, \ldots, z_{\sigma(n)}) \delta_{z_{\sigma(i)}}
    \end{equation*}
    for any $i \in [n]$ and $z_1, \ldots, z_n \in \cZ$. If we have a function $s \colon \cZ \times 2^\cZ \to S$, where $S$ is some set, then we also have
    \begin{equation*}
        \begin{split}
            & s(Z_i; \{Z_1, \ldots, Z_n\}) \, | \, \{Z_1, \ldots, Z_n\} = \{z_1, \ldots, z_n\} \\
            & \qquad \qquad \sim \frac{1}{\sum_{\sigma \in \cS_n} f(z_{\sigma(1)}, \ldots, z_{\sigma(n)})} \sum_{\sigma \in \cS_n} f(z_{\sigma(1)}, \ldots, z_{\sigma(n)}) \delta_{s(z_{\sigma(i)}; \{z_1, \ldots, z_n\})}.
        \end{split}
    \end{equation*}
    If $P_n = P \otimes \cdots \otimes P \otimes Q$ for some distributions $P, Q$ on $\cZ$ whose probability mass functions are $p, q \colon \cZ \to \R_+$, respectively, then
     we have 
    \begin{equation*}
        Z_n \, | \, \{Z_1, \ldots, Z_n\} = \{z_1, \ldots, z_n\} \sim \frac{1}{\sum_{i = 1}^{n} \frac{q\{z_i\}}{p\{z_i\}}} \sum_{i = 1}^{n} \frac{q\{z_i\}}{p\{z_i\}} \delta_{z_i}
    \end{equation*}
    and 
    \begin{equation*}
        s(Z_i; \{Z_1, \ldots, Z_n\}) \, | \, \{Z_1, \ldots, Z_n\} = \{z_1, \ldots, z_n\} \sim \frac{1}{\sum_{i = 1}^{n} \frac{q\{z_i\}}{p\{z_i\}}} \sum_{i = 1}^{n} \frac{q\{z_i\}}{p\{z_i\}} \delta_{s(z_{\sigma(i)}; \{z_1, \ldots, z_n\})}.
    \end{equation*}
\end{lemma}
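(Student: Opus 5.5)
The plan is to compute the conditional law directly from the joint mass function, using the fact that $\cZ$ is finite so that conditioning on the unordered multiset $\{Z_1,\ldots,Z_n\}$ is conditioning on an event of positive probability (whenever that event has positive probability; otherwise there is nothing to prove). Fix $z_1,\ldots,z_n\in\cZ$ and let $E$ be the event that the multiset $\{Z_1,\ldots,Z_n\}$ equals $\{z_1,\ldots,z_n\}$. The event $E$ is the disjoint union of the events $\{(Z_1,\ldots,Z_n)=t\}$ over the \emph{distinct} tuples $t$ that are rearrangements of $(z_1,\ldots,z_n)$.

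The main obstacle is ties among the $z_j$: the map $\sigma\mapsto(z_{\sigma(1)},\ldots,z_{\sigma(n)})$ need not be injective. I handle this by a counting observation. Let $m=\prod_{v\in\cZ} c_v!$, where $c_v$ is the multiplicity of $v$ among the $z_j$. Then every distinct rearrangement $t$ is hit by exactly $m$ permutations $\sigma\in\cS_n$, since the fibres of this map are cosets of the stabilizer of the tuple. Hence
\[
\P(E)=\sum_{t}f(t)=\frac1m\sum_{\sigma\in\cS_n}f(z_{\sigma(1)},\ldots,z_{\sigma(n)}).
\]
By the same argument, for any $v\in\cZ$,
\[
\P(E,\,Z_i=v)=\frac1m\sum_{\sigma\in\cS_n} f(z_{\sigma(1)},\ldots,z_{\sigma(n)})\,1\{z_{\sigma(i)}=v\}.
\]
Dividing the second display by the first cancels the factor $1/m$. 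This gives exactly the claimed mixture $\sum_\sigma f(z_\sigma)\,\delta_{z_{\sigma(i)}}$, normalized by $\sum_\sigma f(z_\sigma)$.

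For the statement about $s$, I note that on $E$ the second argument of $s$ is the fixed set $\{z_1,\ldots,z_n\}$. So $s(Z_i;\{Z_1,\ldots,Z_n\})=g(Z_i)$ with the deterministic map $g(\cdot)=s(\cdot;\{z_1,\ldots,z_n\})$. The conditional law of $g(Z_i)$ is therefore the pushforward of the previous mixture under $g$, which replaces each $\delta_{z_{\sigma(i)}}$ by $\delta_{s(z_{\sigma(i)};\{z_1,\ldots,z_n\})}$.

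For the product case $f=p^{\otimes(n-1)}\otimes q$, I write
\[
f(z_{\sigma(1)},\ldots,z_{\sigma(n)})=\Bigl(\prod_{j=1}^n p\{z_j\}\Bigr)\frac{q\{z_{\sigma(n)}\}}{p\{z_{\sigma(n)}\}}.
\]
This assumes $p>0$ on the relevant points; if some $p\{z_j\}=0$ I would treat it separately, with the ratio read as $+\infty$ so that only those points receive mass. The common product cancels after normalization. Grouping permutations by the value $\sigma(n)=i$, each of which occurs $(n-1)!$ times, yields the weights $\frac{q\{z_i\}/p\{z_i\}}{\sum_k q\{z_k\}/p\{z_k\}}$ on $\delta_{z_i}$ for the law of $Z_n$.

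The corresponding statement for the score follows by the same pushforward under $g$. I read the last display as concerning $s(Z_n;\cdot)$ with atoms $\delta_{s(z_i;\{z_1,\ldots,z_n\})}$, correcting the index typo.
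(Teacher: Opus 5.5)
Your proof is correct and follows the same direct computation as the paper: the mass of the multiset event, the ratio, the pushforward for $s$, and in the product case factoring $f$ and grouping permutations by $\sigma(n)=i$. Your factor $m=\prod_v c_v!$ is a genuine improvement. The paper writes $\P(\{Z_1,\ldots,Z_n\}=\{z_1,\ldots,z_n\})=\sum_{\sigma\in\cS_n} f(z_{\sigma(1)},\ldots,z_{\sigma(n)})$ as if the events $\{(Z_1,\ldots,Z_n)=(z_{\sigma(1)},\ldots,z_{\sigma(n)})\}$ were disjoint, which overcounts by exactly $m$ when there are ties, and its conclusion survives only because, as you show, the same factor appears in the numerator and cancels.
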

\begin{proof}
    Without loss of generality, we can assume $i = n$. It suffices to show that for any $A \subset \cZ$, we have
    \begin{equation*}
        \P(Z_n \in A \, | \, \{Z_1, \ldots, Z_n\} = \{z_1, \ldots, z_n\}) = \frac{\sum_{\sigma \in \cS_n} f(z_{\sigma(1)}, \ldots, z_{\sigma(n)}) 1\{z_{\sigma(n)} \in A\}}{\sum_{\sigma \in \cS_n} f(z_{\sigma(1)}, \ldots, z_{\sigma(n)})}.
    \end{equation*}
    Clearly, 
    \begin{equation*}
        \begin{split}
            \P(\{Z_1, \ldots, Z_n\} = \{z_1, \ldots, z_n\})
            & = \sum_{\sigma \in \cS_n} \P((Z_1, \ldots, Z_n) = (z_{\sigma(1)}, \ldots, z_{\sigma(n)})) \\
            & = \sum_{\sigma \in \cS_n} f(z_{\sigma(1)}, \ldots, z_{\sigma(n)}).
        \end{split}
    \end{equation*}
    Similarly, 
    \begin{equation*}
        \begin{split}
            \P(Z_n \in A, \{Z_1, \ldots, Z_n\} = \{z_1, \ldots, z_n\}) 
            & = \sum_{\sigma \in \cS_n} \P((Z_1, \ldots, Z_n) = (z_{\sigma(1)}, \ldots, z_{\sigma(n)}), z_{\sigma(n)} \in A) \\
            & = \sum_{\sigma \in \cS_n} f(z_{\sigma(1)}, \ldots, z_{\sigma(n)}) 1\{z_{\sigma(n)} \in A\}.
        \end{split}
    \end{equation*}
    Hence, we prove the desired claim. Meanwhile, for a function $s \colon \cZ \times 2^\cZ \to S$, where $S$ is some set, we have for any $B \subset S$,
    \begin{equation*}        
        \begin{split}
            & \P(s(Z_n; \{Z_1, \ldots, Z_n\}) \in B, \{Z_1, \ldots, Z_n\} = \{z_1, \ldots, z_n\}) \\
            & = \sum_{\sigma \in \cS_n} \P((Z_1, \ldots, Z_n) = (z_{\sigma(1)}, \ldots, z_{\sigma(n)}), s(z_{\sigma(n)}; \{z_1, \ldots, z_n\}) \in B) \\
            & = \sum_{\sigma \in \cS_n} f(z_{\sigma(1)}, \ldots, z_{\sigma(n)}) 1\{s(z_{\sigma(n)}; \{z_1, \ldots, z_n\}) \in B\}.
        \end{split}
    \end{equation*}
    Hence, we prove the second claim. Lastly, when $P_n = P \otimes \cdots \otimes P \otimes Q$, we have
    \begin{equation*}
        f(z_1, \ldots, z_n) = \prod_{i = 1}^{n - 1} p\{z_i\} \cdot q\{z_n\} = \left(\prod_{i = 1}^{n} p\{z_i\}\right) \cdot \frac{q\{z_n\}}{p\{z_n\}}.
    \end{equation*}
    Hence, for any $\sigma \in \cS_n$, we have
    \begin{equation*}
        f(z_{\sigma(1)}, \ldots, z_{\sigma(n)}) = \left(\prod_{i = 1}^{n} p\{z_i\}\right) \cdot \frac{q\{z_{\sigma(n)}\}}{p\{z_{\sigma(n)}\}}.
    \end{equation*}
    Therefore, 
    \begin{equation*}
        \begin{split}
            \sum_{\sigma \in \cS_n} f(z_{\sigma(1)}, \ldots, z_{\sigma(n)}) 1\{z_{\sigma(n)} \in A\} 
            & = \left(\prod_{i = 1}^{n} p\{z_i\}\right) \cdot \sum_{\sigma \in \cS_n} \frac{q\{z_{\sigma(n)}\}}{p\{z_{\sigma(n)}\}} 1\{z_{\sigma(n)} \in A\} \\
            & = \left(\prod_{i = 1}^{n} p\{z_i\}\right) \cdot \sum_{i = 1}^{n} \sum_{\sigma \in \cS_n, \sigma(n) = i} \frac{q\{z_i\}}{p\{z_i\}} 1\{z_i \in A\} \\
            & = (n - 1)! \left(\prod_{i = 1}^{n} p\{z_i\}\right) \cdot \sum_{i = 1}^{n} \frac{q\{z_i\}}{p\{z_i\}} 1\{z_i \in A\}.            
        \end{split}
    \end{equation*}
    Similarly, 
    \begin{equation*}
        \sum_{\sigma \in \cS_n} f(z_{\sigma(1)}, \ldots, z_{\sigma(n)}) = (n - 1)! \left(\prod_{i = 1}^{n} p\{z_i\}\right) \cdot \sum_{i = 1}^{n} \frac{q\{z_i\}}{p\{z_i\}}.
    \end{equation*}
    Hence, 
    \begin{equation*}
        \P(Z_n \in A \, | \, \{Z_1, \ldots, Z_n\} = \{z_1, \ldots, z_n\}) = \frac{\sum_{i = 1}^{n} \frac{q\{z_i\}}{p\{z_i\}} 1\{z_i \in A\}}{\sum_{i = 1}^{n} \frac{q\{z_i\}}{p\{z_i\}}},
    \end{equation*}
    which proves
    \begin{equation*}
        Z_n \, | \, \{Z_1, \ldots, Z_n\} = \{z_1, \ldots, z_n\} \sim \frac{1}{\sum_{i = 1}^{n} \frac{q\{z_i\}}{p\{z_i\}}} \sum_{i = 1}^{n} \frac{q\{z_i\}}{p\{z_i\}} \delta_{z_i}.
    \end{equation*}
\end{proof}
\begin{remark}
    Lemma \ref{lem:conditional_distribution_permutation} can be extended to the case where $\cZ$ is an infinite set with $f$ being a probability density function. In this case, instead of conditioning on the event $\{Z_1, \ldots, Z_n\} = \{z_1, \ldots, z_n\}$, we should think of the regular conditional distribution of $Z_i$ given a random measure $\frac{1}{n} \sum_{j = 1}^{n} \delta_{Z_j}$ on $\cZ$. Also, the last part of Lemma \ref{lem:conditional_distribution_permutation} still holds with $\frac{\mathrm{d} Q}{\mathrm{d} P}(z_i)$ in the place of $\frac{q\{z_i\}}{p\{z_i\}}$. This is formally stated as Proposition 7.6 of \cite{angelopoulos2024theoretical}.
\end{remark}

\section{Proofs}
\label{sec:proofs}
\subsection{Proof of Theorem \ref{thm:exact_weights}}

\begin{proof}
By definition of the alignment $\hat{\sigma}_o^\ast$, we have 
\begin{equation*}
    \P\left(\hat{\sigma}_o^\ast(Y_{n + 1}^\ast) \in \hat{\cC}(X_{n + 1}) \mid \hat{\cC}\right) \ge \P\left(Y_{n + 1}^\ast \in \hat{\cC}(X_{n + 1}) \mid \hat{\cC}\right).
\end{equation*}
By taking the expectation, we have $\P(\hat{\sigma}_o^\ast(Y_{n + 1}^\ast) \in \hat{\cC}(X_{n + 1})) \ge \P(Y_{n + 1}^\ast \in \hat{\cC}(X_{n + 1}))$. Hence, it suffices to show $\P(Y_{n + 1}^\ast \in \hat{\cC}(X_{n + 1})) \ge 1 - \alpha$. 

For notational convenience, we prove this for a pretrained version of Algorithm~\ref{alg:oracle_weighted_conformal_clustering}, where we assume that $\cI_{ca} = \{1, \ldots, n\}$, namely, take the whole input as the calibration, while treating the training split and the soft classifier trained on it as fixed independent quantities. Accordingly, from now on, we will condition on the training split. 

Now, let
\[
    Z_i := (X_i,Y_i), \; \ i=1,\ldots,n,
    \qquad\text{and}\qquad
    Z_{n+1} := (X_{n+1}, Y_{n + 1}^\ast).
\]
We will derive the oracle weights appearing in Algorithm~\ref{alg:oracle_weighted_conformal_clustering} from the general weighted conformal prediction theory presented in Appendix~\ref{appendix:wcp_theory}. Note that \(X_1,\ldots,X_{n+1}\) are i.i.d.\ from \(P_X^\ast\), and conditional on the calibration covariates, the synthetic calibration labels are independent with
\[
    Y_j \mid X_1,\ldots,X_n \sim \mathrm{Cat}\bigl(\mathcal A_{\cD_{ca}}(X_j)\bigr),
    \qquad j=1,\ldots,n,
\]
while the test label $Y_{n + 1}^\ast$ is drawn from the true conditional law \(P_{Y\mid X}^\ast(\cdot \mid X_{n+1})\). Then, the joint density of the augmented sample $(Z_1,\ldots,Z_{n+1})$ at $(z_1, \ldots, z_{n + 1}) = ((x_1, y_1), \ldots, (x_{n + 1}, y_{n + 1}))$
is
\begin{equation}
    \label{eq:joint_density_original}
    f(z_1,\ldots,z_{n+1})
    =
    \left(\prod_{k=1}^{n+1} p_X^\ast(x_k)\right)
    \left(\prod_{j=1}^n \left[\mathcal A_{\{x_1, \ldots, x_n\}}(x_j)\right]_{y_j}\right)
    p_{Y\mid X}^\ast(y_{n + 1} \mid x_{n+1}).
\end{equation}
To compute the conformal weights, consider the permutations such that the \(i\)-th observation occupies the test position, namely, \(\sigma \in \mathcal S_{n+1}\) such that $\sigma(n+1)=i$. Under this permutation, the point \((x_i,y_i)\) is treated as the test point, while the remaining \(n\) covariates are $\{x_{\sigma(1)},\ldots,x_{\sigma(n)}\}$. Thus, substituting the permuted indices into the joint density gives
\begin{equation*}
    f(z_{\sigma(1)},\ldots,z_{\sigma(n+1)})
    =
    \left(\prod_{k=1}^{n+1} p_X^\ast(x_{\sigma(k)})\right)
    \left(\prod_{j=1}^n \left[\mathcal A_{\{x_{\sigma(1)},\ldots,x_{\sigma(n)}\}}(x_{\sigma(j)})\right]_{y_{\sigma(j)}}\right)
    p_{Y\mid X}^\ast(y_i \mid x_i).
\end{equation*}
Let \(\{x_{\sigma(1)},\ldots,x_{\sigma(n)}\} = \{x_1, \ldots, x_{n + 1}\} \setminus \{x_i\} =: \cD_{n+1}^{-i}\). Then,
\[
    \prod_{j=1}^n \left[\mathcal A_{\{x_{\sigma(1)},\ldots,x_{\sigma(n)}\}}(x_{\sigma(j)})\right]_{y_{\sigma(j)}}
    =
    \prod_{j \in [n + 1] \setminus \{i\}} \left[\mathcal A_{\cD_{n+1}^{-i}}(x_j)\right]_{y_j}.
\]
Also, note that $\prod_{k=1}^{n+1} p_X^\ast(x_{\sigma(k)})=\prod_{k=1}^{n+1} p_X^\ast(x_k)$. Thus, we have
\begin{equation}
    \label{eq:joint_density_permuted_simplified}
    f(z_{\sigma(1)},\ldots,z_{\sigma(n+1)})
    =
    \left(\prod_{k=1}^{n+1} p_X^\ast(x_k)\right)
    p_{Y\mid X}^\ast(y_i \mid x_i)
    \prod_{j \in [n + 1] \setminus \{i\}} \left[\mathcal A_{\cD_{n+1}^{-i}}(x_j)\right]_{y_j}.
\end{equation}
Observe that the right-hand side of \eqref{eq:joint_density_permuted_simplified} depends only on the index \(i\), and not on the particular choice of permutation \(\sigma\) satisfying \(\sigma(n+1)=i\). Since there are exactly \(n!\) such permutations, this yields the unnormalized weights as
\begin{equation}
\label{eq:unnormalized_weights}
\begin{split}
    \tilde w_i 
    & := \sum_{\sigma \in \mathcal{S}_{n+1}:\sigma(n+1)=i} f(z_{\sigma(1)},\ldots,z_{\sigma(n+1)}) \\
    & =
    n!\left(\prod_{k=1}^{n+1} p_X^\ast(x_k)\right)
    p_{Y\mid X}^\ast(y_i \mid x_i)
    \prod_{j \in [n + 1] \setminus \{i\}} \left[\mathcal A_{\cD_{n+1}^{-i}}(x_j)\right]_{y_j},
\end{split}
\end{equation}
for $i=1,\ldots,n+1$. Now, replacing $(x_i, y_i)$'s with $(X_i, Y_i)$'s and $(x_{n + 1}, y_{n + 1})$ with $(X_{n + 1}, y) = (x, y)$, we obtain the unnormalized weights in \eqref{eq:weight_calibration}-\eqref{eq:weight_test}. Now, we can apply Theorem~\ref{thm:weighted_conformal_permutation} to show that the probability of $Y_{n + 1}^\ast \in \hat{\cC}(X_{n + 1})$ conditional on the training split is at least $1 - \alpha$. By the tower property, we complete the proof.
\end{proof}

\begin{remark}
    \label{rmk:why_no_alignment}
    As we can see from the proof, the key is to apply Theorem~\ref{thm:weighted_conformal_permutation} after calculating the weights. This is possible because of the scores defined in Algorithm~\ref{alg:oracle_weighted_conformal_clustering}. If we included the alignment step as in Algorithm 1 of \cite{hur2026inference}, we would end up with alignment-dependent scores, which we cannot apply Theorem~\ref{thm:weighted_conformal_permutation} to. 
\end{remark}

\subsection{Proof of Theorem \ref{thm:coverage_estimated_weights}}

\begin{proof}
As shown in the proof of Theorem \ref{thm:exact_weights}, we have $\P(\hat{\sigma}_o^\ast(Y_{n + 1}^\ast) \in \hat{\cC}(X_{n + 1})) \ge \P(Y_{n + 1}^\ast \in \hat{\cC}(X_{n + 1}))$. Hence, we derive a lower bound on $\P(Y_{n + 1}^\ast \in \hat{\cC}(X_{n + 1}))$. Also, as in the proof of Theorem \ref{thm:exact_weights}, we assume that $\cI_{ca} = \{1, \ldots, n\}$, treat the training split-related objects as fixed independent quantities, and consider the conditional probability given the training split.

We first consider an auxiliary data-generating process in which the test label is drawn from the estimated conditional law $\widehat{P}_{Y \mid X}$ having the density $\hat{p}_{Y \mid X}$, namely,
\[
    (X_{n+1}, \tilde{Y}_{n+1})
    \sim
    P_X^\ast \times \widehat P_{Y\mid X},
\]
independently of the calibration data. The calibration labels are still generated according to the augmented clustering mechanism. Thus, conditional on $\cD_{aug}$, we have
\[
    Y_j \mid X_1,\ldots,X_n,X_{n+1}
    \sim
    \mathrm{Cat}\bigl(\mathcal A_{\cD_{aug}}(X_j)\bigr),
    \qquad j=1,\ldots,n.
\]
Under this auxiliary setup, let us derive the joint density of the augmented sample $(Z_1,\ldots,Z_{n+1}) := ((X_1, Y_1), \ldots, (X_{n + 1}, \hat{\sigma}^{-1}(\tilde{Y}_{n + 1}))$. Here, notice that $\hat{\sigma}$ in Step 5 of Algorithm~\ref{alg:estimated_weighted_conformal_clustering} is symmetric with respect to the ordering of $X_1, \ldots, X_{n + 1}$. In other words, $\hat{\sigma} = \rho(X_1, \ldots, X_{n + 1})$ for a deterministic symmetric function $\rho$. The density of $(Z_1,\ldots,Z_{n+1})$ at $(z_1, \ldots, z_{n + 1}) = ((x_1, y_1), \ldots, (x_{n + 1}, y_{n + 1}))$ is 
\begin{equation*}
    f(z_1,\ldots,z_{n+1})
    :=
    \left(\prod_{k=1}^{n+1}p_X^\ast(x_k)\right)
    \left(\prod_{j=1}^n [\mathcal A_{\cD}(x_j)]_{y_j}\right)
    \hat{p}_{Y\mid X}(\hat{\sigma}(y_{n + 1}) \mid x_{n+1}),
\end{equation*}
where $\cD := \{x_1, \ldots, x_{n + 1}\}$.

To compute the conformal weights, consider the permutat`ions such that the \(i\)-th observation occupies the test position i.e. \(\sigma \in \mathcal S_{n+1}\) such that $\sigma(n+1)=i$.
Under this permutation, the point \((x_i,y_i)\) is treated as the test point, while the remaining \(n\) covariates form the calibration set. Since \(\{x_{\sigma(1)},\ldots,x_{\sigma(n+1)}\} = \cD\) and $\prod_{k=1}^{n+1} p_X^\ast(x_{\sigma(k)})
=\prod_{k=1}^{n+1} p_X^\ast(x_k)$, we have for every \(\sigma\) satisfying \(\sigma(n+1)=i\),
\begin{equation}
    \label{eq:joint_density_augmented_permuted_simplified}
    f(z_{\sigma(1)},\ldots,z_{\sigma(n+1)})
    =
    \left(\prod_{k=1}^{n+1} p_X^\ast(x_k)\right)
    \left(\prod_{j \in [n + 1] \setminus \{i\}} [\mathcal A_{\cD}(x_j)]_{y_j}\right)
    \hat{p}_{Y\mid X}(y_i\mid x_i).
\end{equation}
Now observe that the right-hand side of \eqref{eq:joint_density_augmented_permuted_simplified} depends only on the index \(i\), and not on the particular permutation \(\sigma\). Since there are exactly \(n!\) permutations in \(\mathcal S_{n+1}\) satisfying \(\sigma(n+1)=i\), we have
\begin{equation*}
    \begin{split}
        \tilde w_i 
        & := \sum_{\sigma \in \mathcal{S}_{n+1}:\sigma(n+1)=i} f(z_{\sigma(1)},\ldots,z_{\sigma(n+1)}) \\
        & =
        n!\left(\prod_{k=1}^{n+1} p_X^\ast(x_k)\right)
        \hat{p}_{Y\mid X}(y_i \mid x_i)
        \prod_{j \in [n + 1] \setminus \{i\}} \left[\mathcal A_{\cD}(x_j)\right]_{y_j} \\
        & = n!\left(\prod_{k=1}^{n+1} p_X^\ast(x_k)\right) \prod_{j= 1}^{n + 1} \left[\mathcal A_{\cD}(x_j)\right]_{y_j} \frac{\hat{p}_{Y\mid X}(y_i \mid x_i)}{ \left[\mathcal A_{\cD}(x_i)\right]_{y_i}}
    \end{split}
\end{equation*}
Now, replacing $(x_i, y_i)$'s with $(X_i, Y_i)$'s and $(x_{n + 1}, y_{n + 1})$ with $(X_{n + 1}, y)$, which replaces $\cD$ with $\cD_{aug}$, we have the unnormalized weights in \eqref{eq:estimated_augmented_weights}. After normalization over \(i=1,\ldots,n+1\), we can apply Theorem \ref{thm:weighted_conformal_permutation} to $(Z_1,\ldots,Z_{n+1}) := ((X_1, Y_1), \ldots, (X_{n + 1}, \hat{\sigma}^{-1}(\tilde{Y}_{n + 1}))$ with the scores 
\begin{equation*}
    s((X_1, \hat{\sigma}(Y_1)); \hat{\pi}), \ldots, s((X_n, \hat{\sigma}(Y_n)); \hat{\pi}), s((X_{n + 1}, \hat{\sigma}(\hat{\sigma}^{-1}(\tilde{Y}_{n + 1}))); \hat{\pi}),
\end{equation*}
which yield
\begin{equation}
    \label{eq:auxiliary_coverage_estimated}
    \P\left(\tilde{Y}_{n+1}\in \hat{\cC}(X_{n+1})\right) \ge 1-\alpha.
\end{equation}
Now, by the definition of total variation distance,
\begin{equation}
    \label{eq:tv_transfer_estimated}
    \left|
    \P\left(\tilde{Y}_{n+1}\in \hat{\cC}(X_{n+1})\right) - \P\left(
    Y_{n+1}^\ast\in \hat{\cC}(X_{n+1})\right)\right|
    \le
    d_{\mathrm{TV}}\!\left(
    P_X^\ast\times \widehat P_{Y\mid X},
    P_X^\ast\times P_{Y\mid X}^\ast
    \right).
\end{equation}
Combining \eqref{eq:auxiliary_coverage_estimated} and \eqref{eq:tv_transfer_estimated} gives
\begin{equation}
    \label{eq:conditional_estimated_coverage}
    \begin{split}
        \P\left(Y_{n+1}^\ast\in \hat{\cC}(X_{n+1})\right) 
        & \ge 1-\alpha - d_{\mathrm{TV}}\!\left(P_X^\ast\times \widehat P_{Y\mid X}, P_X^\ast\times P_{Y\mid X}^\ast\right) \\
        & \ge 1 - \alpha - \int d_{\mathrm{TV}}\!\left(\widehat P_{Y\mid X}(\cdot\mid x), P_{Y\mid X}^\ast(\cdot\mid x)\right) \,dP_X^\ast(x) \\
        & = 1 - \alpha - \E_{X \sim P^\ast_X}\left[d_{\mathrm{TV}}\!\left(\widehat P_{Y\mid X}(\cdot\mid X), P_{Y\mid X}^\ast(\cdot\mid X)\right)\right]
    \end{split}
\end{equation}
where the second inequality is from Lemma 2 of \cite{hur2026inference}. Finally, taking expectation in \eqref{eq:conditional_estimated_coverage} over the randomness in the training split, we obtain
\[
    \P\left(Y_{n+1}^\ast\in \hat{\cC}(X_{n+1})\right) \ge 1-\alpha - \E_{tr} \E_{X \sim P^\ast_X}\left[d_{\mathrm{TV}}\!\left(\widehat P_{Y\mid X}(\cdot\mid X), P_{Y\mid X}^\ast(\cdot\mid X)\right)\right].
\]
\end{proof}
\end{document}